\lstdefinestyle{Sage}{%
    language = Python,
    commentstyle = \color{OliveGreen},
    keywordstyle = \color{blue}    
}
\newcommand{\F}{\mathbb{F}}
\newcommand{\bpf}{\begin{proof}}
\newcommand{\epf}{\end{proof}}
\newcommand{\wt}{\mathrm{wt}}
\definecolor{darkviolet}{rgb}{0.58, 0.0, 0.83}
\newcommand{\rmv}[1]{}
\def\F{\mathbb F}
\def\supp{\mathrm{supp}\,}
\newcommand{\FF}{\mathbb{F}}
\newcommand{\NN}{\mathbb{N}}
\newtheorem{theorem}{Theorem}[section] 
\newtheorem{proposition}[theorem]{Proposition} 
\newtheorem{corollary}[theorem]{Corollary} 
\theoremstyle{definition}
\newtheorem{example}[theorem]{Example} 
\newtheorem{definition}[theorem]{Definition} 
\newtheorem{remark}[theorem]{Remark}
\newtheorem{obs}[theorem]{Observation} 
\newcommand\blfootnote[1]{%
  \begingroup
  \renewcommand\thefootnote{}\footnote{#1}  \addtocounter{footnote}{-1}%
  \endgroup
}
\title{Influences of some families of error-correcting codes}
\author{Hailey Egan, Jason~T.~LeGrow\thanks{Jason T.~LeGrow is supported in part by the Commonwealth of Virginia's Commonwealth Cyber Initiative (CCI), an investment in the advancement of cyber R\&D, innovation, and workforce development. For more information about CCI, visit \url{www.cyberinitiative.org.}}, Gretchen~L.~Matthews\thanks{G. L. Matthews is supported in part by NSF DMS-1855136 and the Commonwealth Cyber Initiative.}, Jeff Suliga\thanks{Support for this project was provided in part by the Commonwealth Cyber Initiative, the Virginia Tech Department of Mathematics, and the MAA Tensor Foundation.}
\blfootnote{\noindent Author list in alphabetical order; see \url{https://ams.org/profession/leaders/ CultureStatement04.pdf.}}}
\date{ }
\begin{document}

\maketitle

\begin{abstract} 
 The ability of a linear error-correcting code to recover erasures is connected to influences of particular monotone Boolean functions. These functions provide insight into the role that particular coordinates play in a code’s erasure repair capability. In this paper, we consider directly the influences of coordinates of a code. We describe a family of codes, called codes with minimum disjoint support, for which all influences may be determined. As a consequence, we find influences of repetition codes and certain distinct weight codes. Computing influences is typically circumvented by appealing to the transitivity of the automorphism group of the code. Some of the codes considered here fail to meet the transitivity conditions required for these standard approaches, yet we can compute them directly.
\end{abstract}

\section{Introduction}

Capacity-achieving codes are considered the holy grail by information theorists, since They provide reliable communication at the most efficient rates. Their existence---proven in Shannon's theorem \cite{shannon}---launched an entire discipline as researchers search for explicit constructions with efficient encoding and decoding algorithms. Recently, algebraic tools have been used by Kudekar, Kumar, Mondelli, Pfister, Şaşoǧlu, and Urbanke to demonstrate the capacity-achieving nature of Reed-Muller codes, settling an old problem in coding theory \cite{Kudekar_17}. The authors connect this coding theory question to influences of variables in Boolean functions, a notion introduced by Ben-Or and Linial \cite{Ben-Or}  for collective coin flipping which has since been used in a variety of contexts; see, for instance, the survey \cite{survey}. 

In this paper, we explore influences as they relate to coordinates of error-correcting codes and the ability of a codeword symbol to be recovered from a received word in which erasures have occurred.  Following the breakthrough work \cite{Kudekar_17}
which  relies on the fact that the automorphism group of a Reed-Muller is doubly transitive, Kumar, Calderbank, and Pfister 
\cite{Kumar_16} observed that a weaker condition on the automorphism group that can be used to prove that a code is capacity achieving. They study the behavior of influences of monotone Boolean functions $f$ that encapsulate which errors are correctable when the orbits of the automorhism group of the code are large. This approach can be thought of as a proxy for working directly with the influences themselves, relying instead on the lower bound on the largest influence provided by the celebrated Kahn-Kalai-Linial Theorem \cite{KKL_88} and the high degree of symmetry of the code. 

A {\bf Boolean function} $f\colon \F_2^M \rightarrow \F_2$ is said to be {\bf monotone} if $f(x) \leq f(y)$  for all $x, y \in \F_2^M$ with $x_i \leq y_i$ for all $i=1, \dots, M$. Here, $\F_2=\{ 0 , 1 \}$ denotes the field with two elements, and we consider the ordering in which $0 \leq 1$. For a real number $0<p<1$, the {\bf influence of coordinate $j$ on $f$}, or simply the influence of $j$, is 
$$
I_j^{(p)}(f) := \sum_{\substack{x \in \F_2^n\\ f(x) \neq f(x+e_j)}} p^{\wt(x)}(1-p)^{M-\wt(x)}
$$
where the {\bf weight} $\wt(x)$  of a vector $x$ is its number of nonzero coordinates and $e_j=(0, \dots, 0, 1, 0, \dots, 0)$ is the standard basis vector whose only nonzero coordinate is in the $j^\text{th}$ position. Notice that $I_j^{(p)}(f)$ may be thought of as the probability of the subset
$$\Omega_j(f):=\left\{ x \in \F_2^M\colon f(x) \neq f(x+e_j) \right\}$$
under
the Bernoulli-$p$ product measure. The {\bf total influence} $I^{(p)}(f):=\sum_{j \in [M]} I_j^{(p)}(f)$ appears in the bound
$$
p_{1-\varepsilon} - p_{\varepsilon} \leq \frac{2\mu_p (\Omega_f) (1-\mu_p (\Omega_f))}{I^{(p)}(f)} \log \frac{1-\varepsilon}{\varepsilon}
$$
on the transition width of the EXtrinsic Information Transfer (EXIT) function \cite{Kudekar_17}. There, code symmetry and a bound on the value $I^{(p)}(f)$ are used to show that the EXIT function exhibits a sharp transition, implying that the code is capacity achieving. As a step toward applying these ideas to more general families of codes, we consider the influences themselves. In particular, we pursue directly the influences of codes with a lower degree of symmetry, noting that theoretical results on influences tend to be elusive. We specify code families for which influences can be expressed for an arbitrary member and describe them completely. While they do not reveal any new families of capacity achieving codes, we hope they will provide insight into behaviors of influences of coordinates of codewords. 

To our knowledge, we are the first group to explicitly compute influences for families of codes in this way. Our study of influences is motivated by their appearance in the proof that some code families, including the Reed-Muller codes, achieve capacity. However, the code families here do not provide new instances of capacity-achieving codes due to their low dimensions. In particular, for each family considered here, as the code length grows, the code rate goes to zero. Hence, it remains an open problem to find new families of capacity-achieving codes by computing influences directly.

This paper is organized as follows. Section 
\ref{prelim_section} reviews terminology needed from coding theory, provides a motivating example, and describes the connection between influences of codes and monotone Boolean functions.  In Section~\ref{min_section}, we consider codes with controlled support, defining so-called {\bf minimal disjoint support codes}, and determine their influences. Finally, we give a summary and discussion of open problems in Section~\ref{conclusion_section}.

\section{Preliminaries} \label{prelim_section}

An $[n,k,d]$ code $C$ over  $\F_2$ is a $k$-dimensional $\F_2$-subspace of the vector space $\F_2^n$ in which any two distinct elements $c, c' \in C$ differ in at least $d$ coordinates. The code $C$ is said to have {\bf length} $n$, {\bf dimension} $k$, and {\bf minimum distance} $d$. We sometimes omit \(d\) from the notation and call \(C\) an \([n,k]\) code. Elements of $C$ are called {\bf codewords}. When using an \([n,k]\) code $C$, received words are elements of $\left(\F_2 \sqcup \{ * \} \right)^n$ where $*$ denotes an erasure. 
Associated with a received word $w \in \left(\F_2 \sqcup \{ * \} \right)^n$ is an {\bf erasure pattern} $e \in \F_2^n$ whose entries are given by
$$e_i:= \begin{cases} 1 & \textnormal{if } w_i=* \\ 0 & \textnormal{otherwise}. \end{cases}$$

The code $C$ may be described by a {\bf generator matrix} $G$, meaning a matrix whose rows span $C$.  We may consider $xG$ as the encoding of a message vector $x \in \F_2^k$ and typically take $G \in \F_2^{k \times n}$ so that
$$C = \left\{ x G \colon x \in \F_2^k \right\},$$
where $\F_2^{m \times n}$ denotes the set of $m \times n$ matrices with entries in $\F_2$. Often \(G\) will appear in {\bf systematic form}, meaning that the first \(k\) columns of \(G\) are the \(k \times k\) identity matrix. The code $C$ can be also described by a {\bf parity-check matrix} $H \in \F_2^{(n-k) \times n}$, which is a matrix that has $C$ as its nullspace; that is, 
$$
C=\left\{ x  \in \F_2^n \colon Hx^T = 0 \right\}.
$$
When \(C\) has generator matrix \(G = [I | P]\) in systematic form, a parity-check matrix for \(C\) is \(H = [P | I]\) (in characteristic 2).

The {\bf product} of an $[n_1,k_1]$ code $C_1$ and an $[n_2,k_2]$ code $C_2$ is defined to be the code \(C_1 \otimes C_2\) with generator matrix  
\[ G_1 \otimes G_2 = \left(\begin{array}{cccc} a_{11}G_2 & a_{12}G_2 & \cdots & a_{1n_1}G_2 \\ a_{21}G_2 & a_{22}G_2 & \cdots & a_{2n_1}G_2 \\
\vdots & \vdots &  & \vdots \\ a_{k_11}G_2 & a_{k_12}G_2 & \cdots & a_{k_1n_1}G_2 \\ \end{array}\right)
\in \F_{2}^{k_1k_2\times n_1n_2}
\] where $G_1 =
\left(\begin{array}{cccc} a_{11} & a_{12} & \cdots & a_{1n_1} \\ a_{21} & a_{22} & \cdots & a_{2n_1} \\
\vdots & \vdots &  & \vdots \\ a_{k_11} & a_{k_12} & \cdots & a_{k_1n_1} \\ \end{array}\right) \in \F_{2}^{k_1\times n_1}$ is a generator matrix of $C_1$ and  $G_2  \in \F_{2}^{k_2\times n_2}$ is a generator matrix of the code $C_2$. Equivalently, \(C_1 \otimes C_2\) is simply the linear span of \( \{c_1 \otimes c_2 \,:\, c_1 \in C_1, c_2 \in C_2\}\).

Given a positive integer $n$, set $[n]:=\left\{ 1, \dots, n \right\}$.
We refer to elements of $\F_2^n$ as {\bf words} or {\bf vectors}. 
 We write $A \sqcup B$ to mean the disjoint union of sets $A$ and $B$ (typically, subsets of $[n]$ or some $\F_2^{n'}$ with $n' \leq n$).

\section{Influences of coordinates}
In this section, we use a scenario to describe the connection between influences of coordinates of codewords to influences of related monotone Boolean functions and erasure recovery. We then formalize the notion and provide some examples. 

We begin with some fundamental definitions. Let $C$ be an \([n,k]\) code. The {\bf support} of $x \in \F_2^n$ is $$\supp(x) := \left\{ i \in [n]\colon x_i \neq 0 \right\}.$$ Notice that the weight of $x \in \F_2^n$ satisfies $$\wt(x) =  | \left\{ i \in [n]\colon x_i \neq 0 \right\} | = | \supp (x) |.$$
\begin{definition}
Given $x,x' \in \F_2^n$, we say that $x$ {\bf covers} $x'$ if and only if $\supp(x) \supseteq \supp(x')$. In this case, we write $x \succeq x'$. We may also write $x' \preceq x$ 
and say that $x'$ is {\bf covered by} $x$.
\end{definition}

We are now prepared for a motivating example. Consider the code \(C\) with systematic generator matrix
\[
G = \left[ \begin{array}{rrrr} 1 & 0 & 0 & 0 \\ 0 & 1 & 0 & 0 \\ 0 & 0 & 1 & 0 \\ 0 & 0 & 0 & 1 \end{array} \right| \left.
\begin{array}{rrrrr}
1 & 0 & 1 & 0 & 1 \\
1 & 0 & 1 & 1 & 0 \\
0 & 1 & 1 & 0 & 0 \\ 
1 & 1 & 1 & 1 & 1 \\
\end{array}
\right] \in \F_2^{4 \times 9}
\]

We would like to know what kind of erasures cannot be recovered by \(C\). At the highest level, we are unable to recover the underlying codeword \(c\) from a partially erased codeword \(w \in \left(\FF_2 \sqcup \{*\}\right)^9\) if there exist two or more plausible codewords for \(w\); that is, two distinct codewords \(x, y\) which agree with \(w\) on the indices which have not been erased. For instance, consider the received word 
\[
w = \begin{array}{ccccccccc} *& *& 1& *& 0& 0& 1& *& * \end{array}
\]
which we may think of as a partially-erased codeword. There are (at least) two plausible codewords:
\begin{align*}
x &= \begin{array}{ccccccccc} 1& 0& 1& 1& 0& 0& 1& 1& 0\end{array}\\
y &= \begin{array}{ccccccccc} 0& 1& 1& 1& 0& 0& 1& 0& 1\end{array}
\end{align*}

This is equivalent to saying that \(x-y\) is zero at all indices which are not erased in \(w\). Since \(C\) is a linear code and $x \neq y$, \(x-y\) is a (nonzero) codeword which is identically zero on the coordinates which are not erased. Consequently, if we had received a partially-erased codeword \(w'\) with the same erasure pattern as \(w\), and in which each non-erased coordinate was \(0\), we could not determine whether the sent codeword was \(\vec{0}\) or \(x-y\). 

This ambiguity arises when the erasure pattern erases all non-zero coordinates of any non-zero codeword. This naturally brings us to an equivalent, more convenient way to determine whether an error is recoverable: 

\begin{obs}
A received word is recoverable from an erasure pattern \(e \in \F_2^n\)  if and only if \(e\) does not cover any non-zero codeword of \(C\).
\end{obs}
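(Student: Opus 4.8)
The plan is to prove the two implications separately, after first pinning down the meaning of the statement: a received word $w$ with erasure pattern $e$ is \emph{recoverable} precisely when there is a \emph{unique} codeword of $C$ agreeing with $w$ on the non-erased coordinates (those $i \in [n]$ with $e_i = 0$). Because $C$ is linear, whether this uniqueness holds will turn out to depend only on $e$ and not on the actual received symbols, so the Observation is genuinely a statement about the erasure pattern $e$ alone; I would say this explicitly at the outset.

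For the forward direction I would argue by contraposition. Suppose $e$ covers some nonzero codeword $c \in C$, i.e.\ $\supp(c) \subseteq \supp(e)$, equivalently $c_i = 0$ for every $i$ with $e_i = 0$. Given any codeword $x \in C$, form the received word $w$ by erasing the entries of $x$ in the positions of $\supp(e)$. Then $x$ and $x + c$ are both codewords that agree with $w$ on every non-erased coordinate, and they are distinct since $c \neq 0$. Hence $w$ admits two plausible codewords and is not recoverable.

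For the converse, suppose $e$ covers no nonzero codeword. Let $w$ be a received word with erasure pattern $e$, and suppose $x, y \in C$ both agree with $w$ on all coordinates $i$ with $e_i = 0$. Then $x - y \in C$ by linearity, and $(x-y)_i = 0$ whenever $e_i = 0$, so $\supp(x - y) \subseteq \supp(e)$; that is, $e$ covers $x - y$. By hypothesis this forces $x - y = 0$, so $x = y$, and the codeword consistent with $w$ is unique, i.e.\ $w$ is recoverable.

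The argument is short, and I do not expect a substantial obstacle; the only point requiring care is the bookkeeping around the definition of recoverability, namely the observation that recoverability of a received word depends only on its erasure pattern (via the linear-algebraic fact that the set of codewords vanishing on the non-erased coordinates is a coset translate away from any fixed consistent codeword). Once that is in place, both directions reduce to unwinding the definition of ``covers'' in terms of supports.
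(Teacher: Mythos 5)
Your proof is correct and follows essentially the same route as the paper's informal justification: distinct plausible codewords differ by a nonzero codeword supported inside the erasure pattern, and conversely a covered nonzero codeword $c$ makes $x$ and $x+c$ indistinguishable. Your explicit remark that recoverability depends only on $e$ (not on the received symbols) is a reasonable tightening of the paper's discussion, but not a different approach.
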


In the case of this example, we had
\[
e = \begin{array}{ccccccccc} 1& 1& 0& 1& 0& 0& 0& 1& 1 \end{array}
\]
which covers the codeword
\[
x-y =  \begin{array}{ccccccccc} 1& 1& 0& 0& 0& 0& 0& 1& 1 \end{array}
\]
and so, given the partially-erased codeword
\[
w' = \begin{array}{ccccccccc} *& *& 0& *& 0& 0& 0& *& * \end{array}
\]
we cannot recover, because we cannot determine whether the correct codeword is \(x-y\) or \(\vec{0}\).

Beyond full codeword recovery, we may instead be interested in recovering only specific coordinates of a partially-erased codeword---this is a slightly more delicate problem than that described above. We now introduce some useful tools to handle this situation.

The set of vectors covering a codeword with $i$ in its support is
\[
\Omega_i(C) := \{ {x} \in \FF_2^n \,\colon\, {x} \succeq {c} \mbox{ for some } c \in C \mbox{ with } c_i \neq 0 \};
\]
we write $\Omega_i$ if the code is clear from the context. It may be helpful to think about the set of codewords with $i$ in their supports:
$$S_i:= \left\{ c \in C \colon c_i \neq 0 \right\} = \left\{ c \in C \colon i \in \supp (c) \right\} \subseteq \F_2^n.$$
With this notation in mind, we have
\[
\Omega_i = \{ {x} \in \FF_2^n \,:\, {x} \succeq {c} \mbox{ for some } c \in S_i  \}
\] 
and $$| \Omega_i | \geq | S_i |.$$ 
The codeword symbol in position $i \in [n]$ can be recovered from a received word impacted by erasure pattern $e \in \F_2^n$ if and only if $$e \nsucceq c \ \forall c \in S_{i}.$$ Hence, 
the set of erasure patterns that prevent recovery of $i \in [n]$ is precisely $\Omega_i$.

\begin{definition} Given a code $C$ of length $n$ and $j \in [n] \setminus \{ i \}$, the {\bf \(j^\text{th}\) boundary of \(\Omega_i\)} is
\[
\partial_j \Omega_i := \{{x} \in \FF_2^n \,:\, {x} \in \Omega_i \wedge {x} + {e}_j \not\in \Omega_i \} \cup \{ x \in \F_2^n\colon  {x} \not\in \Omega_i \wedge {x} + {e}_j \in \Omega_i\}.
\]
The {\bf \(j^\text{th}\) boundary} of $C$ is 
\[
B_j: = \bigcup_{i \in [n] \setminus \{  j \} } \partial_j\Omega_i;
\]
\end{definition}

We may think of the \(j^\text{th}\) boundary of \(\Omega_i\),  \(\partial_j\Omega_i\), as the set of all vectors \({x} \in \F_2^n\) such that changing the \(j^\text{th}\) coordinate of \({x}\) ``toggles'' whether \({x}\) is in \(\Omega_i\). 

To determine how influential $j \in [n]$ is on recovery of $i \in [n] \setminus \{ j \}$, observe that the set $\partial_j\Omega_i$ consists of erasure patterns where changing the $j^\text{th}$ coordinate either ``moves'' a vector
\begin{itemize}
\item 
 from $\Omega_i$ to $\F_2^n\setminus \Omega_i$, meaning changing the $j^\text{th}$ coordinate converts an erasure pattern that was not recoverable into one that is, or
\item  from $\F_2^n\setminus \Omega_i$
to $\Omega_i$, meaning changing the $j^\text{th}$ coordinate converts an erasure pattern that was recoverable into one that is not.
\end{itemize}
Loosely speaking, \(B_j\) is the set of  vectors for which flipping their \(j^\text{th}\) entry results in ``crossing'' the \(j^\text{th}\) boundary of some \(\Omega_i\). 
The influence of $j \in [n]$ on recovery is then found by considering how influential $j$ is on all $\Omega_i$ where $i \in [n] \setminus \{j\}$ giving rise the the definition below. 

\begin{definition}
Given $0 <p<1$, the {\bf influence of the $j^\text{th}$ coordinate of a code} $C$ is defined by
\begin{equation} \label{inf_def}
I_j^{(p)}(C) = \sum_{{x} \in B_j} p^{\wt({x}) - 1}(1-p)^{n - \wt({x})}
\end{equation} and the {\bf total influence of (the coordinates of) a code $C$} is $$I^{(p)}(C)=\sum_{j \in [n]} I_j^{(p)}(C).$$
\end{definition}
We often write $I^{(p)}_j$ and $I^{(p)}$ when the code \(C\) is clear from the context.

\begin{remark}
Note that $I_j^{(p)}(C)$
intuitively measures the probability that changing the  \(j^\text{th}\) coordinate \(x_j\) of $x\in \F_2^n$ causes \({x}\) to cross the \(j^\text{th}\) boundary of some $\Omega_i$. Observe that if \(x \in \partial_j\Omega_i\) then certainly \(x_i = 1\), since if \(x_i = 0\) then \( (x+e_j)_i = 0\) so that neither \(x\) nor \(x+e_j\) can be in \(\Omega_i\).
Then 
$$
I_j^{(p)}(C)  \leq \sum_{{x} \in \F_2^n} p^{\wt({x}) - 1}(1-p)^{n - \wt({x})} = \frac{1}{p} \sum_{\ell=0}^n \binom{n}{\ell} p^{\ell}(1-p)^{n-\ell} = \frac{1}{p}.
$$
\end{remark}

We now consider a toy example to illustrate these concepts.
 
\begin{example} \label{toy_ex}
Consider the $[5,2,3]$ code $C$ with generator matrix 
$$
\left[ 
\begin{array}{ccccc}
1 & 1 & 1 & 0 & 0 \\
0 & 0 & 1 & 1 & 1
\end{array}
\right]. 
$$
Observe that $$C=\left\{ (1,1,1,0,0),  (0,0,1,1,1), (1,1,0,1,1), (0,0,0,0,0) \right\}.$$ 
Considering the sets of codewords with particular indices in their supports, we see that $$S_1=S_2=\left\{ (1,1,1,0,0),(1,1,0,1,1) \right\},$$
$$S_3=\left\{ (1,1,1,0,0),(0,0,1,1,1) \right\},$$ and 
$$S_4=S_5=\left\{ (0,0,1,1,1),(1,1,0,1,1)   \right\}.$$
It follows that the sets of vectors covering those in the sets above are
$$\Omega_1=\Omega_2=\left\{ (1,1,1,0,0),(1,1,1,1,0),(1,1,1,0,1),(1,1,1,1,1),(1,1,0,1,1) \right\},$$
$$\Omega_3=\left\{ \begin{array}{l}(1,1,1,0,0),(1,1,1,1,0),(1,1,1,0,1),(1,1,1,1,1), \\ (0,0,1,1,1),(0,1,1,1,1),(1,0,1,1,1) \end{array} \right\},$$ and 
$$\Omega_4=\Omega_5=\left\{ (0,0,1,1,1), (1,0,1,1,1), (0,1,1,1,1), (1,1,1,1,1),(1,1,0,1,1)   \right\}.$$
To find the $j^{th}$ boundaries when $j=1$, we first note that 
$$\partial_1 \Omega_2
=\left\{ \begin{array}{l} (1,1,1,0,0), (1,1,1,1,0),(1,1,1,0,1),(1,1,1,1,1),(1,1,0,1,1),\\ (0,1,1,0,0),(0,1,1,1,0),(0,1,1,0,1),(0,1,1,1,1),(0,1,0,1,1) \end{array} \right\},$$
$$ \partial_1 \Omega_3
=\left\{ \begin{array}{l}(1,1,1,0,0), (1,1,1,1,0),(1,1,1,0,1),
(0,0,1,1,1),\\(0,1,1,0,0),(0,1,1,1,0),(0,1,1,0,1) \end{array} \right\},$$ and 
$$ \partial_1 \Omega_4
=\partial_1 \Omega_5=\left\{ (1,1,0,1,1),(0,1,0, 1,1)\right\}.$$As a result, 
$$
B_1=\left\{ 
\begin{array}{l}
(0,1,1,0,0),
(0,1,0,1,1),
(0,1,1,0,1),
(0,1,1,1,0),
(1,1,1,0,0), \\
(0,1,1,1,1),
(1,1,0,1,1),
(1,1,1,0,1),
(1,1,1,1,0),
(1,1,1,1,1)
\end{array}
\right\}.$$
Similarly, 
$$
B_2=\left\{ 
\begin{array}{l}
(0,1,1,0,0),
(0,1,0,1,1),
(0,1,1,0,1),
(0,1,1,1,0),
(1,1,1,0,0), \\
(0,1,1,0,1),
(0,1,0,1,0),
(0,1,1,0,0),
(0,1,1,1,1),
(1,1,1,0,1)
\end{array}
\right\}.$$
As a result, 
$$I_1^{(p)}=I_2^{(p)}=p(1-p)^3+4p^2(1-p)^2+4p^3(1-p)+p^4.$$
In addition, 
$$ \partial_3 \Omega_1
=\partial_3 \Omega_2=\left\{ \begin{array}{l}
(1,1,0,0,0),(1,1,0,0,1),(1,1,0,1,0),\\ (1,1,1,0,0),(1,1,1,0,1),(1,1,1,1,0) \end{array}
\right\}$$
and 
$$ \partial_3 \Omega_4
=\partial_3 \Omega_5=\left\{ \begin{array}{l}
(0,0,0,1,1),(0,0,1,1,1),(0,1,0,1,1), \\(1,0,0,1,1), (0,1,1,1,1),(1,0,1,1,1) \end{array}
\right\}.$$
Thus,
$$ B_3=\left\{ \begin{array}{l}
(0,0,0,1,1), (1,1,0,0,0),(0,0,1,1,1),(0,1,0,1,1),(1,0,0,1,1),(1,1,0,0,1),\\(1,1,0,1,0), (1,1,1,0,0),(0,1,1,1,1),(1,0,1,1,1),(1,1,1,0,1),(1,1,1,1,0) \end{array}
\right\}$$
As a result, 
$$
I_3^{(p)}=2p(1-p)^3+6p^2(1-p)^2+4p^3(1-p).
$$
Finally, we observe that 
$$ \partial_4 \Omega_1
=\partial_4 \Omega_2=\left\{ 
(1, 1, 0, 0, 1),(1,1,0,1,1)
\right\},$$
$$ \partial_4 \Omega_3
=\left\{ 
(0,0,1,0,1),(0,1,1,0,1),(1,0,1,0,1)
\right\},$$ and 
$$ \partial_4 \Omega_5
=\left\{ \begin{array}{l}
(0,0,1,0,1),(0,0,1,1,1),(0,1,1,0,1),(1,0,1,0,1),(1,1,0,0,1),\\(0,1,1,1,1),(1,0,1,1,1),(1,1,0,1,1),(1,1,1,0,1),(1,1,1,1,1) \end{array}
\right\}.$$ 
Similarly, we find that
$$ B_5
=\left\{ \begin{array}{l}
(0,0,1,0,1),(0,0,1,1,1),(0,1,1,0,1),(1,0,1,0,1),(1,1,0,0,1),\\ (0,0,1,0,0),(0,0,1,1,0),(0,1,1,0,0),(1,0,1,0,0),(1,1,0,0,0)\end{array}
\right\}$$ 
and 
$$I_5^{(p)}=p(1-p)^3+4p^2(1-p)^2+4p^3(1-p)+p^4.$$
Therefore, for this code, 
\begin{align*}
I^{(p)}&=
4 \left( p(1-p)^3+4p^2(1-p)^2+4p^3(1-p)+p^4 \right)\\
&\hphantom{=}+2p(1-p)^3+6p^2(1-p)^2+4p^3(1-p) \\
&=6p(1-p)^3+22p^2(1-p)^2+20p^3(1-p). 
\end{align*}
\end{example}

In Example \ref{toy_ex}, we see that $S_i=S_{i'}$ implies $\Omega_i=\Omega_{i'}$. While this fact follows immediately from the definitions, the example highlights its impact on $B_j$ when $\Omega_i=\Omega_j$. In particular, we see that $S_1=S_2,\, \Omega_1 \sqcup \Omega_1+e_1 =B_1$ and $B_2=\Omega_2 \sqcup \Omega_2+e_2 =B_2$; also, $S_4=S_5,\,$ $\Omega_4 \sqcup \Omega_4+e_4=B_4,\,$ and $\Omega_5 \sqcup \Omega_5+e_5 =B_5.$

\begin{remark} \label{wt_enum_rem}
We note that the influence of the $j^\mathrm{th}$ coordinate depends on the weights of words in $B_j$ rather than the words themselves. That might suggest that influences of a code with known weight enumerator $\sum_{c \in C} x^{\wt(c)}$ are easy to calculate. However, we are reminded that the weights needed are not necessarily those of codewords but instead of words in $B_j$. In Example \ref{toy_ex}, we see that $B_j \setminus C \neq \emptyset$ for all $j \in [5]$. 
\end{remark}

Despite the word of caution in Remark \ref{wt_enum_rem}, knowledge of the weights of codewords can provide some information about influences in certain circumstances. Suppose
$S_i=S_{j}$ for some $i \neq j$. Then $\Omega_i=\Omega_{j}$. Moreover,  we see that 
$$\left\{ x \in \Omega_i \colon x+e_j \notin \Omega_i \right\} = \Omega_i = \Omega_j$$
and 
$$\left\{ x \notin \Omega_i \colon x+e_j \in \Omega_i \right\} = \Omega_i-e_j = \Omega_j-e_j.$$
Consequently, recalling that $S_j \subseteq \Omega_j$, we can make the following observation. 

\begin{obs}
If $S_i=S_j$ for some $i \neq j$, $$S_j \sqcup S_j+e_j \subseteq
\Omega_j \sqcup \Omega_j+e_j \subseteq B_j.
$$ In this case, $$I_j^{(p)} \geq \sum_{x \in S_j} p^{\wt({x}) - 1}(1-p)^{n - \wt({x})}.$$
\end{obs}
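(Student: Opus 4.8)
The plan is to trace the two displayed identities that precede the statement through the definitions of $\partial_j\Omega_i$ and $B_j$, and then simply restrict the defining sum for $I_j^{(p)}$.

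First I would record that $S_i = S_j$ forces $\Omega_i = \Omega_j$, since $\Omega_i$ is by definition the up-set (under $\succeq$) generated by $S_i$. Next comes the one structural fact I actually need: every $x \in \Omega_j$ has $x_j = 1$, because $x$ covers some $c \in S_j$ with $c_j \neq 0$; consequently $(x+e_j)_j = 0$, so $x+e_j$ covers no element of $S_j$ and hence $x+e_j \notin \Omega_j$. This shows $\Omega_j \cap (\Omega_j + e_j) = \emptyset$, which legitimizes the $\sqcup$ notation, and it is exactly the first displayed identity $\{x \in \Omega_i \colon x+e_j \notin \Omega_i\} = \Omega_i = \Omega_j$. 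Combining this with the second displayed identity $\{x \notin \Omega_i \colon x+e_j \in \Omega_i\} = \Omega_i - e_j = \Omega_j + e_j$ (using $-e_j = e_j$ in characteristic $2$) gives $\partial_j\Omega_i = \Omega_j \cup (\Omega_j + e_j) = \Omega_j \sqcup (\Omega_j + e_j)$. Since $i \neq j$, the set $\partial_j\Omega_i$ is one of the sets whose union defines $B_j$, so $\Omega_j \sqcup (\Omega_j + e_j) \subseteq B_j$, which is the second inclusion. For the first inclusion I would use that each codeword covers itself, so $S_j \subseteq \Omega_j$ and therefore $S_j + e_j \subseteq \Omega_j + e_j$, with disjointness of $S_j$ and $S_j + e_j$ inherited from that of $\Omega_j$ and $\Omega_j + e_j$.

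For the influence inequality, I would substitute the chain $B_j \supseteq \Omega_j \sqcup (\Omega_j + e_j) \supseteq S_j$ into the definition $I_j^{(p)}(C) = \sum_{x \in B_j} p^{\wt(x)-1}(1-p)^{n-\wt(x)}$; since every summand is nonnegative, discarding the terms outside $S_j$ yields $I_j^{(p)} \geq \sum_{x \in S_j} p^{\wt(x)-1}(1-p)^{n-\wt(x)}$, as claimed.

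There is no real obstacle beyond careful bookkeeping: the two nontrivial inputs (the descriptions of $\{x \in \Omega_i \colon x+e_j \notin \Omega_i\}$ and $\{x \notin \Omega_i \colon x+e_j \in \Omega_i\}$ when $\Omega_i = \Omega_j$) are already established in the surrounding text, so the task is to assemble them and to verify the disjointness statements that make the $\sqcup$'s valid. One could note in passing that retaining the $S_j + e_j$ terms gives the strictly sharper bound $I_j^{(p)} \geq \tfrac1p \sum_{x \in S_j} p^{\wt(x)-1}(1-p)^{n-\wt(x)}$, since for $x \in S_j$ one has $\wt(x+e_j) = \wt(x)-1$ and $p^{\wt(x)-1}(1-p)^{n-\wt(x)} + p^{\wt(x)-2}(1-p)^{n-\wt(x)+1} = p^{\wt(x)-2}(1-p)^{n-\wt(x)}$; but the stated inequality follows from nonnegativity alone.
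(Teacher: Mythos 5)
Your proof is correct and follows essentially the same route as the paper, which establishes the observation via the preceding discussion: $S_i=S_j$ forces $\Omega_i=\Omega_j$, the two displayed identities give $\partial_j\Omega_i=\Omega_j\sqcup(\Omega_j+e_j)\subseteq B_j$ (using $i\neq j$), and $S_j\subseteq\Omega_j$ plus nonnegativity of the summands yields the influence bound. Your closing remark about the sharper bound $I_j^{(p)}\geq \tfrac1p\sum_{x\in S_j}p^{\wt(x)-1}(1-p)^{n-\wt(x)}$ is also correct, though not needed for the stated claim.
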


This notion of influence of coordinate of a code is connected to that of monotone Boolean functions. Consider the characteristic function of $\Omega_i$:
$$
\begin{array}{lllc}
\chi_{\Omega_i}: &\F_2^n &\rightarrow & \F_2 \\ 
&x & \mapsto & \begin{cases} 1 & \textnormal{if } x \in \Omega_i \\ 0 & \textnormal{otherwise}. \end{cases}
\end{array}
$$
Notice that the \(j^\text{th}\) boundary of $\Omega_i$ is \[
\partial_j \Omega_i = \{{x} \in \FF_2^n \,:\, \chi_{\Omega_i}({x}) \neq \chi_{\Omega_i}({x} + {e}_j)\}
\]
Observe that if $\partial_j\Omega_i \neq \partial_j\Omega_\ell$ for all $i, \ell \in [n] \setminus \{ j \}$ with $i \neq \ell$, then 
$$I^{(p)}_j(C) = p^{-1} \sum_{i \in [n] \setminus \{ j \}} I^{(p)}_j(\chi_{\Omega_i}).$$

\section{Codes with minimum disjoint support} \label{min_section}

In this section, we consider a families of codes 
whose coordinates may be partitioned into sets which support so-called {\bf minimal codewords}, meaning those whose support does not contain the support of another nonzero codeword \cite{minimal}. We will see that this family of codes contains the familiar repetition codes as well as some distinct weight codes. Then, in the next section, we will demonstrate that this condition makes it straightforward to compute the influences of variables of these codes and determine the total influence. As a consequence, we provide expressions for influences of repetition codes and some relatives. 

We begin by recalling some codes from the literature. The {\bf $r$-times repetition code} $C_r$ may be defined by the generator matrix
$$G_{r}:= \left[ 
\begin{array}{cccc}
1_{r} &  & & \\
& 1_{r} & & \\
& & \ddots & \\
& & & 1_{r} 
\end{array}
\right] \in \F_2^{k \times rk}$$
expressed in block form where $1_r = 1 1 \cdots 1$ is a block of $r$ $1$s and all other entries in the matrix are zero. Note $C_r$ is an $[rk,k,r]$ code. The codewords of $C_r$ are of the form
$$x G_r = (\underbrace{x_1, \dots, x_1}_r, \dots, \underbrace{x_k, \dots, x_k}_r) \in \F_2^{rk}$$ where $x \in \F_2^k$ and each coordinate is repeated $r$ times. 

The next codes that will be considered were introduced in \cite{HH} as distinct weight codes (see also \cite{how_many}). 
They are reminiscent of repetition codes, but they lack the symmetry that $C_r$ features as coordinates are repeated different numbers of times depending on their position. A code $C$ of length $n$ is called a {\bf distinct weight code} if the map
$$
\begin{array}{cccc}
\wt: &C &\rightarrow& [n] \\
&c &\mapsto & \wt(c)
\end{array}
$$ 
is injective. Consider the block diagonal matrix given by 
$$G_{r,k}:= \left[ 
\begin{array}{cccc}
1_{2^r} &  & & \\
& 1_{2^{r+1}} & & \\
& & \ddots & \\
& & & 1_{2^{r+k-1}} 
\end{array}
\right]  \in \F_2^{k \times 2^r (2^{k}-1)}.$$
Note that the $i^\text{th}$ row of $G_{r,k}$ has weight $2^{r+i-1}$ for all $i \in [k]$. Let $C_{r,k}$ denote the binary code with generator matrix $G_{r,k}$. The codewords of $C_{r,k}$ are of the form
$$x G_{r,k} = (\underbrace{x_1, \dots, x_1}_{2^r}, \underbrace{x_2, \dots, x_2}_{2^{r+1}} \dots, \underbrace{x_k, \dots, x_k}_{2^{r+k-1}}) \in \F_2^{2^{r+k}-2^r}$$ where $x \in \F_2^k$. Hence, $C_{r,k}$ is a $[2^{r+k}-2^r,k,2^r]$ code. 

We are also interested in a hybrid between the repetition codes and distinct weight codes introduced above. We may observe that the repetition code and distinct weight code mentioned above may be seen as these hybrid codes. Even so, it can be convenient to consider their more refined forms to reflect the impact of the more controlled structures. Consider a partition of $[n]$ into parts $A_1, \dots, A_k$:
$$
[n] = A_1 \sqcup A_2 \sqcup \dots \sqcup A_k.
$$
Let $C_A$ denote the {\bf hybrid code} with generator matrix 
$$\left[ 
\begin{array}{cccc}
1_{A_1} &  & & \\
& 1_{A_2} & & \\
& & \ddots & \\
& & & 1_{A_k} 
\end{array}
\right] \in \F_2^{k \times \sum_{i=1}^k | A_i |}.$$
Then $C_A$ is an $[n,k,\min \{ |A_i| \colon i \in k \}]$ code.

Next, we introduce the family of codes we wish to study. 

\begin{definition} A binary $[n,k,d]$ code $C$ is said to have {\bf minimum disjoint support} provided:
\begin{enumerate}
\item  For each $i \in [n]$, 
the set of codewords $S_i$
                    has a minimum element $u_i$ according to $\prec$.
\item There exist   $i_1, \dots, i_s \in [n]$ such that 
 $$[n] = \supp(u_{i_1}) \sqcup \dots \sqcup \supp(u_{i_s}).$$
\end{enumerate}
The codewords $u_i$ are called {\bf minimum support codewords}.
\end{definition}

We will demonstrate that each of the codes introduced at the beginning of this section is a minimum disjoint support code. We will identify minimum elements $u_i$ of the $S_i$, $i \in [n]$, and a partition 
 $$[n] =T_{i_1} \sqcup \dots \sqcup T_{i_s}$$
 where $T_i:=\supp(u_i)$. We will see in Section \ref{inf_section} that these partitions will be very useful in finding influences of coordinates of codewords.

\begin{proposition}
    The $r$-times repetition code $C_r$, the distinct weight code $C_{r,k}$, and the hybrid code $C_A$ are minimum disjoint support codes.
\end{proposition}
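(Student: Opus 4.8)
The plan is to verify the two defining conditions of minimum disjoint support directly for each of the three code families, exploiting the fact that $C_A$ is the general construction and $C_r$, $C_{r,k}$ are special cases (with $A_i$ of size $r$, respectively $2^{r+i-1}$). So it suffices to prove the statement for $C_A$, and then remark that the other two follow immediately by specializing the partition. Let me set up the key observation: since the generator matrix of $C_A$ is block-diagonal with the $i^\text{th}$ block equal to $1_{A_i}$, a codeword $xG$ for $x\in\F_2^k$ is supported exactly on $\bigcup_{i\,:\,x_i=1} A_i$. Hence every codeword's support is a union of parts $A_i$, and conversely every such union arises.

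\medskip

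First I would establish condition (1): for each coordinate $\ell \in [n]$, there is a unique $i$ with $\ell \in A_i$, and I claim the minimum element of $S_\ell$ under $\preceq$ is $u_\ell := e_i G$, the codeword with support exactly $A_i$ (coming from the message $e_i\in\F_2^k$). Indeed, any $c\in S_\ell$ satisfies $\ell\in\supp(c)$, and since $\supp(c)$ is a union of parts containing $\ell$, it must contain the whole part $A_i$; thus $\supp(c)\supseteq A_i=\supp(u_\ell)$, i.e.\ $c\succeq u_\ell$. And $u_\ell\in S_\ell$ since $\ell\in A_i=\supp(u_\ell)$. This shows $u_\ell$ is the (necessarily unique) minimum of $S_\ell$. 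Note that $u_\ell$ depends only on the part containing $\ell$: if $\ell,\ell'\in A_i$ then $u_\ell=u_{\ell'}$ with support $A_i$.

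\medskip

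Next, condition (2): I need indices $i_1,\dots,i_s\in[n]$ with $[n]=\bigsqcup_{j}\supp(u_{i_j})$. Pick one representative index $\ell_i\in A_i$ from each part $i=1,\dots,k$ (each $A_i$ is nonempty since it contributes a row to the generator matrix, and if some $A_i$ were empty one would simply drop it — or note the construction implicitly assumes each part nonempty for the dimension count to be $k$). Then $\supp(u_{\ell_i})=A_i$, and $[n]=A_1\sqcup\dots\sqcup A_k=\supp(u_{\ell_1})\sqcup\dots\sqcup\supp(u_{\ell_k})$ is exactly the required disjoint-support partition, with $s=k$. This completes the proof for $C_A$.

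\medskip

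Finally I would close the argument by observing that $C_r$ is the hybrid code $C_A$ for the partition of $[rk]$ into $k$ consecutive blocks of size $r$, and $C_{r,k}$ is $C_A$ for the partition of $[2^r(2^k-1)]$ into consecutive blocks of sizes $2^r,2^{r+1},\dots,2^{r+k-1}$; in both cases the generator matrices literally coincide. Hence both are minimum disjoint support codes, with minimum support codewords $u_\ell$ being the block-indicator codewords and the disjoint-support partition being the evident block partition. I do not anticipate a genuine obstacle here — the result is essentially structural — but the one point requiring a little care is the uniqueness and well-definedness of the minimum element $u_i$ (condition (1) asks for a minimum, not merely a minimal, element), which is where the \emph{block-diagonal} structure is essential: it forces every codeword support to be a union of whole parts, so that within $S_\ell$ there is a single smallest support rather than several incomparable minimal ones.
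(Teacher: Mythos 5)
Your proof is correct, and it is organized differently from the paper's. The paper treats the three families in parallel: for each of $C_r$, $C_{r,k}$, and $C_A$ it simply exhibits the minimum support codewords $u_i$ (the indicator vectors of the blocks) and the corresponding partition of $[n]$ into supports, essentially leaving the verification that each $u_i$ is the minimum of $S_i$ under $\preceq$ as an evident consequence of the block-diagonal generator matrix. You instead prove the statement once, for the hybrid code $C_A$, and you supply exactly the justification the paper glosses over: since every codeword's support is a union of whole parts $A_i$, any $c\in S_\ell$ with $\ell\in A_i$ must satisfy $\supp(c)\supseteq A_i=\supp(u_\ell)$, so $u_\ell=e_iG$ is a genuine minimum (not merely a minimal element) of $S_\ell$. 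You then obtain $C_r$ and $C_{r,k}$ by specializing the partition to blocks of size $r$, respectively $2^r,2^{r+1},\dots,2^{r+k-1}$ --- a reduction the paper itself hints at (``the repetition code and distinct weight code \dots may be seen as these hybrid codes'') but does not exploit in its proof. The identified data (minimum support codewords and partitions) coincide with the paper's; what your route buys is economy and a slightly higher level of rigor on condition (1), while the paper's parallel treatment makes the explicit coordinate ranges (e.g.\ $T_{sr+1}=\dots=T_{(s+1)r}$ for $C_r$) immediately available for the influence computations in Section~\ref{inf_section}. Your aside about empty parts is harmless; as you note, the construction presumes each $A_i$ nonempty.
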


\begin{proof}
For the repetition code is $C_r$, we can take
$$u_{sr+1}=\dots=u_{(s+1)r}=\sum_{t=1}^r e_{sr+t}$$ and 
$$T_{sr+1}=\dots=T_{(s+1)r}= \left\{ sr+1, \dots, (s+1)r \right\}$$
for $s \in \left\{ 0, \dots, k-1 \right\}$. Hence, $$[n]= T_r \sqcup T_{2r} \sqcup \dots \sqcup T_{kr}.$$

Next, observe that the distinct weight code $C_{r,k}$ has $S_1=\dots=S_{2^r}$ and $S_{2^{r+s-1}+1}=\dots=S_{2^{r+s}}$ for all $1 \leq s \leq k-1$. 
Hence, $C_{r,k}$ is a minimum disjoint support code with minimal support codewords
$$
u_1=\dots=u_{2^r} = \sum_{\ell=1}^{2^r} e_\ell$$
and 
$$
u_{2^{r+s-1}+1}=\dots=u_{2^{r+s}} = \sum_{\ell=2^{r+s-1}+1}^{2^{r+s}} e_\ell$$ for $s \in \{ 1, \dots, k-1\}$. 
Hence, $$[n]= T_{2^r} \sqcup T_{2^{r+1}} \sqcup \dots \sqcup T_{2^{r+k-1}}.$$

Finally, the hybrid code $C_A$ has $$u_i=\sum_{j \in A_i} e_j$$ for each $i \in [n]$ and 
\[[n] = \bigsqcup_{i=1}^k T_{\ell_i}\] where we define
$\ell_i = \sum\limits_{j=1}^k | A_j |$.
\qedhere

\end{proof}

\begin{proposition}
\label{prop:DisjointSupport}
Let \(C\) be a binary \([n,k,d]\) minimum disjoint support code with minimum support codewords \(u_1, u_2, \hdots, u_n\). Then for any \(i,j \leq n\) either \(u_i = u_j\), or \(u_i\) and \(u_j\) have disjoint support.
\end{proposition}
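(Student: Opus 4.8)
The plan is to isolate the key structural fact that each minimum support codeword $u_i$ is actually a \emph{minimal} codeword — meaning $\supp(u_i)$ contains the support of no other nonzero codeword — and then to read off the dichotomy from this together with the elementary observation that, over $\F_2$, a vector is completely determined by its support.

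So the first step is the intermediate claim: for every $i\in[n]$ there is no nonzero codeword $v$ with $\supp(v)\subsetneq\supp(u_i)$. Suppose such a $v$ exists. If $i\in\supp(v)$, then $v\in S_i$, and since $u_i$ is the $\preceq$-minimum of $S_i$ we get $u_i\preceq v$, i.e.\ $\supp(u_i)\subseteq\supp(v)$, contradicting $\supp(v)\subsetneq\supp(u_i)$. If instead $i\notin\supp(v)$, I would pass to the codeword $w:=u_i+v$; a direct check (using $\supp(v)\subseteq\supp(u_i)$ and working in characteristic $2$) shows $\supp(w)=\supp(u_i)\setminus\supp(v)$, which still contains $i$, so $w\in S_i$ and $w\neq 0$. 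Minimality of $u_i$ again forces $u_i\preceq w$, i.e.\ $\supp(u_i)\subseteq\supp(u_i)\setminus\supp(v)$, hence $\supp(u_i)\cap\supp(v)=\emptyset$; but $\emptyset\neq\supp(v)\subseteq\supp(u_i)$, a contradiction. This establishes that each $u_i$ is a minimal codeword.

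With the claim in hand the proposition is immediate. Suppose $\supp(u_i)\cap\supp(u_j)\neq\emptyset$ and choose $\ell$ in this intersection. Then $u_i,u_j\in S_\ell$, so $u_\ell\preceq u_i$ and $u_\ell\preceq u_j$; moreover $u_\ell\neq 0$ since $\ell\in\supp(u_\ell)$. Applying the minimality of the codewords $u_i$ and $u_j$ to the inclusions $\supp(u_\ell)\subseteq\supp(u_i)$ and $\supp(u_\ell)\subseteq\supp(u_j)$ upgrades both to equalities, so $\supp(u_i)=\supp(u_\ell)=\supp(u_j)$, and since a binary vector is determined by its support, $u_i=u_j$.

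The only place that requires any ingenuity — and thus the main obstacle — is the second case of the intermediate claim: when $i\notin\supp(v)$ one cannot compare $u_i$ with $v$ directly, and the resolution is to replace $v$ by $u_i+v$, which reintroduces $i$ into the support while shrinking it, making minimality of $u_i$ applicable. It is worth noting that this argument uses only the first defining property of a minimum disjoint support code (that each $S_i$ has a $\preceq$-least element); the disjointness asserted in property~(2) is then a consequence of this proposition rather than an ingredient of its proof.
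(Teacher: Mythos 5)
Your proof is correct and follows essentially the same route as the paper's: the decisive step in both is to add \(u_i\) to a codeword of strictly smaller support (over \(\F_2\)) so as to reintroduce coordinate \(i\), producing an element of \(S_i\) whose support excludes part of \(\supp(u_i)\), contradicting the \(\preceq\)-minimality of \(u_i\) in \(S_i\). The only structural difference is that you package this as an intermediate lemma that each \(u_i\) is a minimal codeword and then specialize to \(v = u_\ell\), whereas the paper runs the same argument directly on \(u_t\) for \(t\) in the common support.
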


\begin{proof}
Suppose that \(u_i\) and \(u_j\) do not have disjoint support, and let \(t \in \supp(u_i) \cap \supp(u_j)\). Since \(u_i\) is supported at \(t\), we must have \(u_t \preceq u_i\). Now, if \(i \not\in \supp(u_t)\), then \(u_i + u_t\) is supported at \(i\) and not at \(t\), contradicting our choice of \(t\) and our characterization of \(u_i\). Thus \(i \in \supp(u_t)\), and so by the minimality of \(u_i\), we must have \(u_i \preceq u_t\). Of course, this establishes that \(\supp(u_i) = \supp(u_t)\), and so \(u_i = u_t\), since \(C\) is binary. A precisely analogous argument establishes that \(u_j = u_t\), and consequently \(u_i = u_j\), as required.
\end{proof}

\begin{proposition}
    Given binary minimum disjoint support codes $C_1$ and $C_2$ of lengths $n_1$ and $n_2$, their product $C_1 \otimes C_2$ is also a minimum disjoint support code of length $n_1n_2$. 
\end{proposition}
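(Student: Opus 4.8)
The plan is to pin down the minimum support codewords of $C_1 \otimes C_2$ explicitly in terms of those of the two factors, and then to manufacture the required partition by taking the product of the two partitions we are given. Throughout I identify $\F_2^{n_1 n_2}$ with the space of $n_1 \times n_2$ matrices over $\F_2$, so that a coordinate of $C_1 \otimes C_2$ is a pair $(a,b) \in [n_1] \times [n_2]$ and a generator $c_1 \otimes c_2$ corresponds to the matrix with $(a,b)$-entry $(c_1)_a (c_2)_b$; under this identification $\supp(c_1 \otimes c_2) = \supp(c_1) \times \supp(c_2)$. Write $u_1,\dots,u_{n_1}$ for the minimum support codewords of $C_1$ and $v_1,\dots,v_{n_2}$ for those of $C_2$.

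\textbf{Step 1: the minimum support codewords of the product.} I claim that for each coordinate $(a,b)$, the codeword $u_a \otimes v_b$ is the minimum element of $S_{(a,b)}(C_1 \otimes C_2)$. It clearly belongs to that set, since its $(a,b)$-entry is $(u_a)_a(v_b)_b = 1$. For the covering property, let $w \in S_{(a,b)}(C_1 \otimes C_2)$ and view it as a matrix $M$. Since $C_1 \otimes C_2$ is the linear span of the outer products $c_1 \otimes c_2$, and since for such an outer product every column is a scalar multiple of $c_1$ and every row a scalar multiple of $c_2$, linearity of $C_1$ and $C_2$ forces every column of $M$ to lie in $C_1$ and every row of $M$ to lie in $C_2$. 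In particular, column $b$ of $M$ is a codeword of $C_1$ whose entry $M_{a,b} = w_{(a,b)}$ is nonzero, so by minimality of $u_a$ this column covers $u_a$; hence $M_{a',b} \neq 0$ for all $a' \in \supp(u_a)$. For each such $a'$, row $a'$ of $M$ is a codeword of $C_2$ with nonzero entry $M_{a',b}$, so it covers $v_b$; hence $M_{a',b'} \neq 0$ for all $b' \in \supp(v_b)$. Letting $a'$ and $b'$ range gives $\supp(u_a)\times\supp(v_b) \subseteq \supp(w)$, that is, $u_a \otimes v_b \preceq w$. Thus $S_{(a,b)}$ has the minimum element $u_a \otimes v_b$, establishing condition (1) of the definition (with minimum support codeword $u_{(a,b)} := u_a \otimes v_b$).

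\textbf{Step 2: the partition.} Apply the disjoint support hypothesis to each factor to write $[n_1] = \supp(u_{i_1}) \sqcup \dots \sqcup \supp(u_{i_{s_1}})$ and $[n_2] = \supp(v_{j_1}) \sqcup \dots \sqcup \supp(v_{j_{s_2}})$. Forming the product of these two partitions,
\[
[n_1]\times[n_2] \;=\; \bigsqcup_{p=1}^{s_1}\bigsqcup_{q=1}^{s_2}\bigl(\supp(u_{i_p})\times\supp(v_{j_q})\bigr) \;=\; \bigsqcup_{p,q}\supp\bigl(u_{i_p}\otimes v_{j_q}\bigr),
\]
which by Step 1 exhibits $[n_1 n_2]$ as a disjoint union of supports of minimum support codewords of $C_1 \otimes C_2$ (namely those attached to the coordinates $(i_p,j_q)$). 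This is condition (2), so $C_1 \otimes C_2$ has minimum disjoint support, and its length is $n_1 n_2$ by definition of the product code.

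I do not expect a serious obstacle here; the only point requiring care is the matrix picture of $C_1 \otimes C_2$ together with the observation that its codewords have all rows in $C_2$ and all columns in $C_1$ — but only the easy direction of that characterization is used, and it is immediate from linearity plus the description of $C_1 \otimes C_2$ as the linear span of the $c_1 \otimes c_2$. The remaining bookkeeping (the coordinate identification $[n_1 n_2] \cong [n_1]\times[n_2]$ and the fact that $\supp$ converts $\otimes$ into $\times$) is routine.
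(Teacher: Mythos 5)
Your proof is correct, but it takes a genuinely different route from the paper's. You establish the key covering property via the matrix picture of the product code: every codeword of $C_1 \otimes C_2$, viewed as an $n_1 \times n_2$ matrix, has all columns in $C_1$ and all rows in $C_2$ (the easy containment, immediate from linearity and the description of $C_1 \otimes C_2$ as the span of outer products), and then a two-step ``column, then rows'' application of minimality of $u_a$ and $v_b$ gives $\supp(u_a) \times \supp(v_b) \subseteq \supp(w)$ directly. The paper instead fixes an arbitrary decomposition $c = \sum_i c_i^1 \otimes c_i^2$ into simple tensors, splits the terms according to whether they are supported at the coordinate $(r,t)$, uses that the number of terms supported there is odd to get $u_r \otimes v_t \preceq \sum x_i^1 \otimes x_i^2$, and then must show the remaining terms vanish identically on $\supp(u_r \otimes v_t)$ --- a step that requires its Proposition on minimum support codewords having equal or disjoint supports (Proposition~\ref{prop:DisjointSupport}). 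Your argument bypasses both the parity bookkeeping and that proposition entirely, at the cost of importing (a one-line consequence of) the standard row/column structure of tensor product codes; the paper's argument is more hands-on with the span definition but is longer and leans on the auxiliary disjoint-support lemma. The partition step (product of the two given partitions, using $\supp(u_{i_p} \otimes v_{j_q}) = \supp(u_{i_p}) \times \supp(v_{j_q})$) is essentially identical in both.
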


\begin{proof}
    Let \(\{u_1, \hdots, u_{n_1}\}\) and \(\{v_1, \hdots, v_{n_2}\}\) be the minimum suppport codewords of \(C_1\) and \(C_2\), respectively. We claim that \(\{u_i\otimes v_j\}_{\substack{1 \leq i \leq {n_1} \\ 1 \leq j \leq n_2}}\) are minimum support codewords for \(C_1 \otimes C_2\).

    First observe that each \(u_i \otimes v_j \in C_1 \otimes C_2\). Next, choose \(r \in [n_1]\) and \(t\in [n_2]\), and let \(c \in C_1 \otimes C_2\) be such that \(c_{(r,t)} = 1\). We may write
    \begin{align}
    \label{eqn:tensordecomposition}
    c = \sum_{i=1}^N c_i^1 \otimes c_i^2
    \end{align}
    for some \(N \in \NN\) and some \(c_1^1, c_2^1, \hdots, c_N^1 \in C_1\) and \(c_1^2, c_2^2, \hdots, c_N^2 \in C_2\). Let \(n\) be the number of terms \( c_i^1 \otimes c_i^2\) with \( (c_i^1 \otimes c_i^2)_{(r,t)} = 1\), and let \(m = N - n\). Let \(1 \leq i_1 < i_2 < \cdots < i_n \leq N\) be the indices which satisfy \( (c_{i_j}^1 \otimes c_{i_j}^2)_{(r,t)} = 1\), and let \(1 \leq i'_1, < i'_2 < \cdots < i'_{m} \leq N\) be the remaining indices. Defining
    \begin{align*}
    x_j^1 &= c_{i_j}^1  \mbox{ for } j = 1,2,\hdots, n\\
    x_j^2 &= c_{i_j}^2  \mbox{ for } j = 1,2,\hdots, n\\
    y_j^1 &= c_{i'_j}^1 \mbox{ for } j = 1,2,\hdots, m\\
    y_j^2 &= c_{i'_j}^2 \mbox{ for } j = 1,2,\hdots, m
    \end{align*}
    we obtain the following expression for \(c\):
    \[
        c = \sum_{i=1}^n x_i^1 \otimes x_i^2 + \sum_{i=1}^m y_i^1 \otimes y_i^2 
    \]
    where each \(x_i^1, y_i^1 \in C_1\), \(x_i^2, y_i^2 \in C_2\), each \((x_i^1 \otimes x_i^2)_{(r,t)} = 1\) and each \((y_i^1 \otimes y_i^2)_{(r,t)} = 0\). Obviously \(n\) is odd. As well, we have
    \begin{align*}
        (x_i^1 \otimes x_i^2)_{(r,t)} = 1 \implies x_{i,r}^1 = 1 \mbox{ and } x_{i,t}^2 = 1
    \end{align*}
    so that \(u_r \preceq x_i^1\) and \(v_t \preceq x_i^2\) for all \(i\). Consequently, \( u_r \otimes v_t \preceq x_i^1 \otimes x_i^2\) for all \(i\), and so (since \(n\) is odd), 
    \[
    u_r \otimes v_t \preceq \sum_{i=1}^n x_i^1\otimes x_i^2.
    \]

    It remains to show that \(\left(\sum_{i=1}^m y_i^1 \otimes y_i^2\right)_{(r',t')} = 0\) for all \( (r',t') \in \supp(u_r \otimes v_t)\), since this will yield \(u_r \otimes v_t \preceq c\).

    Fix \(j\), and suppose that \( (y_j^1 \otimes y_j^2)_{(r',t')} = 1\) for some \( (r',t') \in \supp(u_r \otimes v_t)\). Obviously this requires \(y_{j,r'}^1 = 1\) and \(y_{j,t'}^2 = 1\), so that \(u_{r'} \preceq y_j^1\) and \(v_{t'} \preceq y_j^2\). But \( (s',t') \in \supp(u_r \otimes v_t)\) says that \(r' \in \supp(u_s)\) and \(t' \in \supp(v_t)\), so that by Proposition~\ref{prop:DisjointSupport}, we in fact have \(u_r \preceq y_j^1\) and \(v_t \preceq y_j^2\). But this gives \( (y_j^1 \otimes y_j^2)_{(r,t)} = 1\), contradicting the assumption that \((y_j^1 \otimes y_j^2)_{(r,t)} = 0\). So we must have \( (y_j^1 \otimes y_j^2)_{(r',t')} = 0\) for all \( (r',t') \in \supp(u_r \otimes v_t)\), and so \(\left(\sum_{i=1}^m y_i^1 \otimes y_i^2\right)_{(r',t')} = 0\) for all \( (r',t') \in \supp(u_r \otimes v_t)\), as required.

    It follows that \( (u_r \otimes v_t) \preceq c\), as required.

    Finally, for any \(r \in [s_1]\) and \(t \in [s_2]\), we have \(\supp(u_r \otimes v_t) = T_i^1 \times T_j^2\). Moreover, if \([n_1] = \bigsqcup_{i \in \{i_1, \hdots, i_a\}} T_i^1\) and \([n_2] = \bigsqcup_{j \in \{j_1, \hdots, j_b\}} T_j^1\), then  
\[
[n_1] \times [n_2] = \bigsqcup_{\substack{i \in \{ i_1, \dots, i_{a} \} \\[2pt] j \in \{ j_1, \dots, j_{b} \}}} T^1_i \times T^2_j = \bigsqcup_{\substack{i \in \{ i_1, \dots, i_{a} \} \\[2pt] j \in \{ i_1, \dots, i_{b} \}}} \supp(u_i \otimes v_j)
\]
which completes the proof.
\end{proof}

\begin{remark}
It is worth distinguishing minimum disjoint support codes from minimal codes, which are codes in which every codeword covers only itself \cite{minimal}. More formally, a code $C$ is minimal provided $c \preceq c'$ for codewords $c, c' \in C \setminus \{ 0 \}$ implies $c=c'$. Notice that the $2$-times repetition code of length $6$ is not minimal since $(1,1,0,0,0,0), (1,1,1, 1,0,0) \in C$ and $(1,1,0,0,0,0) \preceq (1,1,1, 1,0,0)$.
\end{remark}

We will see that the structure of minimum disjoint support codes lends itself to determining the sets $\Omega_i$ and their boundaries, facilitating the calculation of influences.

\section{Influences of coordinates of parity-check and minimum disjoint support codes} \label{inf_section}

In this section, we compute influences of some families of codes. We begin with a result on influences of the coordinates of a simple parity-check code and then consider minimum disjoint support codes.

Consider the {\bf simple parity-check code $C$} of length $n$ which is given by parity-check matrix $$H = [1 \cdots 1] \in \F_2^{1 \times n},$$ so that
$$
C = \left\{ c \in \F_2^n\colon \sum_{i=1}^n c_i = 0 \right\}.$$
Then $C$ is an $[n, n-1,2]$ code as codewords are of the form $(c_1, \dots, c_{n-1}, \sum_{i=1}^{n-1} c_i)$. 

\begin{proposition} \label{pc_inf_prop}
Given an $[n, n-1,2]$ simple parity-check code, for all $j \in [n]$, the influence of the $j^{th}$ coordinate is 
$$I_j^{(p)} = (n-1)(1-p)^{n-2}.$$
Therefore, 
$$I^{(p)}=n(n-1)(1-p)^{n-2}.$$
\end{proposition}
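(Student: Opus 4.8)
The plan is to work directly from the definitions: first pin down the sets $\Omega_i$, then their $j$-th boundaries $\partial_j\Omega_i$, then $B_j$, and finally evaluate the weight-indexed sum in the definition of $I_j^{(p)}$.

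First I would identify the relevant supports. The nonzero codewords of $C$ are exactly the even-weight words of $\F_2^n$ other than $\vec{0}$, and among those with $i$ in their support the minimal ones with respect to $\preceq$ are precisely the weight-two words $e_i + e_\ell$ with $\ell \neq i$. Hence $\Omega_i = \{x \in \F_2^n : x_i = 1 \text{ and } \wt(x) \geq 2\}$: any such $x$ covers $e_i + e_\ell$ for any $\ell \in \supp(x) \setminus \{i\}$, and conversely a word covering some $e_i + e_\ell$ has $x_i = 1$ and weight at least $2$. In particular the only word with $x_i = 1$ lying outside $\Omega_i$ is $e_i$ itself, and every word with $x_i = 0$ lies outside $\Omega_i$.

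Next I would compute $\partial_j\Omega_i$ for a fixed $j \in [n]$ and $i \neq j$. If $x_i = 0$, then both $x$ and $x + e_j$ have $i$-th coordinate $0$ (as $j \neq i$), so neither lies in $\Omega_i$ and $x \notin \partial_j\Omega_i$. If $x_i = 1$, then $x \in \Omega_i \iff x \neq e_i$ while $x + e_j \in \Omega_i \iff x + e_j \neq e_i \iff x \neq e_i + e_j$; since $e_i \neq e_i + e_j$, exactly one of these conditions fails precisely when $x \in \{e_i, e_i + e_j\}$. Thus $\partial_j\Omega_i = \{e_i,\, e_i + e_j\}$. Taking the union over $i \in [n] \setminus \{j\}$, the words $e_i$ ($i \neq j$) form $n-1$ distinct weight-one words and the words $e_i + e_j$ ($i \neq j$) form $n-1$ distinct weight-two words — all $2(n-1)$ pairwise distinct, since weight-one and weight-two words cannot coincide and $e_j$ itself never appears. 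So $B_j$ consists of exactly $n-1$ words of weight $1$ and $n-1$ words of weight $2$.

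Finally I would substitute into the defining formula \eqref{inf_def}: the weight-one words contribute $(n-1)\,p^{0}(1-p)^{n-1}$ and the weight-two words contribute $(n-1)\,p^{1}(1-p)^{n-2}$, so
$$I_j^{(p)} = (n-1)(1-p)^{n-1} + (n-1)p(1-p)^{n-2} = (n-1)(1-p)^{n-2}\bigl((1-p)+p\bigr) = (n-1)(1-p)^{n-2},$$
and summing over the $n$ coordinates yields $I^{(p)} = n(n-1)(1-p)^{n-2}$. There is no serious obstacle; the only steps needing care are the computation of $\partial_j\Omega_i$ (checking that no extra words, in particular $e_j$, enter) and the bookkeeping of the exponent $\wt(x)-1$ when assembling the sum over $B_j$.
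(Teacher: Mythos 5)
Your proof is correct and follows essentially the same route as the paper: identify $\Omega_i = \{x : x_i = 1,\ \wt(x) \geq 2\}$, show $\partial_j\Omega_i = \{e_i,\, e_i + e_j\}$, conclude $B_j$ consists of $n-1$ weight-one and $n-1$ weight-two words, and evaluate the sum. The only difference is that you spell out the case analysis for the boundary computation in a bit more detail than the paper does.
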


\begin{proof}
For $i \in [n]$, $S_i = \left\{ x \in \F_2^n\colon x_i = 1, \wt(x) \mbox{ is even} \right\}$. Then 
$$
\begin{array}{lll}
\Omega_i &=& \left\{ x \in \F_2^n\colon x_i = 1, \wt(x) \geq 2 \right\} \\
& = & \left\{ x \in \F_2^n\colon x_i = 1 \right\} \setminus \left\{ e_i \right \}.
\end{array}
$$ 
Notice that for $i \in [n] \setminus \{ j \}$, $e_{ij}:=e_i+e_j \in \Omega_i$ but $e_{ij}+e_j = e_i \notin \Omega_i$ since $\wt(e_i)=1$. It follows that 
 $$\partial_j \Omega_i = \{ e_{ij}, e_i \}$$ and 
$$B_j = \bigcup_{i \in [n] \setminus \{ j \} }  \{  e_i, e_{ij} \}.$$
Since $B_j$ contains $n-1$ words $e_{ij}$ of weight $2$ and $n-1$ words $e_i$ of weight $1$, 
$$I_j^{(p)} = (n-1) \left( p(1-p)^{n - 2} + (1-p)^{n-1} \right) = (n-1)(1-p)^{n-2}.$$
Therefore, 
\[I^{(p)}=n(n-1)(1-p)^{n-2}. \qedhere\]
\end{proof}

Next, we consider a particular example to illustrate the observations made in the proof of Proposition \ref{pc_inf_prop}. 

\begin{example} \label{pc_ex}
    Consider the $[10,9,2]$ parity-check code $C$ and the influence of first coordinate on the other nine coordinates. Take $j=1$ and first consider $i=2$. Observe that $$S_2=\left\{ c \in C\colon c_2 \neq 0\right\}=\left\{ \left(c_1, 1, c_3, \dots, c_{10} \right)\colon c_1+ c_3+ \dots+ c_{10} = 1 \right\} 
    $$ 
    and $$\Omega_2=\left\{ \left(x_1, 1, x_3, \dots, x_{10} \right)\colon  \left(x_1, x_3, \dots, x_{10} \right) \neq (0, \dots, 0) \right\}.$$
    Then $$\left\{ x \in \Omega_2\colon x+e_1 \notin \Omega_2 \right\}=\left\{ (1,1,0,\dots, 0) \right\}$$ and $$\left\{ x \notin \Omega_2\colon x+e_1 \in \Omega_2 \right\}=\left\{ (0,1,0,\dots, 0) \right\}.$$ Hence, $$\partial_1 \Omega_2=
    \left\{  (1,1,0,\dots, 0), (0,1,0,\dots, 0) \right\}.$$ Similarly, we find that 
    $$
    \begin{array}{ccl}
    \partial_1 \Omega_3&=&
    \left\{  (1,0,1, 0,\dots, 0), (0,0, 1,0,\dots, 0) \right\}\\
    \partial_1 \Omega_4&=&
    \left\{  (1,0,0, 1, 0,\dots, 0), (0,0, 0,1,0,\dots, 0) \right\}\\
    & \vdots & \\
    \partial_1 \Omega_{10}&=&
    \left\{  (1, 0,\dots, 0, 1), (0,\dots, 0, 1) \right\}.
    \end{array}
    $$
    Hence, $$B_1 = \left\{ e_1+e_i\colon i \in \{ 2, \dots, 10 \} \right\} \cup \left\{ e_i \colon i \in \{ 2, \dots, 10 \right\}\}.$$
   As a result, $$I_1^{(p)} = 
   9p^{2-1}(1-p)^{10-2}+9p^{1-1}(1-p)^{10-1}=9p(1-p)^8+9(1-p)^9=9(1-p)^8.$$ The influences of other coordinates may be found similarly. Alternatively, one may observe that $I_j^{(p)}=I_{j'}^{(p)}$ for all $j, j' \in [10]$. Therefore, 
   $$I^{(p)}=10\cdot 9(1-p)^8=90(1-p)^8.$$
\end{example}

According to Proposition~\ref{pc_inf_prop}, all coordinates have the same influence. This fact also follows from the double transitivity of the automophism group of the code (see, for instance, \cite{Kumar_16}). 

Next, we consider influences of coordinates of the codes introduced in Section \ref{min_section}. 

\begin{theorem} \label{min_thm}
Consider a binary $[n,k,d]$ code with minimum disjoint support given by $[n] = T_{i_1} \sqcup \dots \sqcup T_{i_s}$. Then the influence of coordinate $j$ where $j \in T_i$ is
$$
I_j^{(p)} = p^{| T_i |-2}.
$$
The total influence  is given by
$$I^{(p)}=\sum_{\ell=1}^s | T_{i_\ell} | p^{| T_{i_\ell} | -2}.$$
\end{theorem}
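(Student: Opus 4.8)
The plan is to reduce everything to understanding the sets $\Omega_i$ and their $j$-th boundaries under the disjoint-support hypothesis, using the partition $[n] = T_{i_1} \sqcup \dots \sqcup T_{i_s}$ with $T_i = \supp(u_i)$. First I would show that if $j \in T_i$, then $\Omega_j = \Omega_i$ and, more usefully, that $\Omega_i$ has a clean combinatorial description: since $u_i$ is the minimum element of $S_i$ and every codeword supported at $i$ covers $u_i$, a vector $x$ covers some codeword supported at $i$ if and only if $x \succeq u_i$ (one direction is immediate; for the other, $x \succeq c$ for some $c \in S_i$ forces $x \succeq c \succeq u_i$, and conversely $u_i \in S_i$ itself). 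Hence $\Omega_i = \{x \in \F_2^n : x \succeq u_i\} = \{x : x_t = 1 \text{ for all } t \in T_i\}$, i.e. membership in $\Omega_i$ depends only on the coordinates of $x$ in $T_i$, and only on whether they are all $1$.

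Next I would compute $\partial_j \Omega_i$ for $j \in T_i$. Since $\chi_{\Omega_i}(x)$ is the conjunction $\bigwedge_{t \in T_i} x_t$, flipping coordinate $j \in T_i$ toggles membership exactly when all the other coordinates of $T_i$ are already $1$; that is, $\partial_j \Omega_i = \{x : x_t = 1 \text{ for all } t \in T_i \setminus \{j\}\}$, which consists of pairs $\{x, x+e_j\}$. The key observation (already flagged in the excerpt's remark and observations) is that for $j \in T_i$ and any $\ell \neq i$ with $\ell \in [n] \setminus \{j\}$, Proposition~\ref{prop:DisjointSupport} gives that $T_\ell$ is either equal to $T_i$ or disjoint from it; if $T_\ell$ is disjoint from $T_i$ then $j \notin T_\ell$, and since $x \in \partial_j \Omega_\ell$ forces $x_\ell = 1$... more to the point, I claim $\partial_j \Omega_\ell = \emptyset$ whenever $j \notin T_\ell$, because $\chi_{\Omega_\ell}(x)$ does not depend on coordinate $j$ at all. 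Therefore $B_j = \bigcup_{\ell : T_\ell = T_i,\ \ell \neq j} \partial_j \Omega_\ell = \partial_j \Omega_i$ (all these boundaries coincide since $\Omega_\ell = \Omega_i$), so $B_j = \{x \in \F_2^n : x_t = 1 \ \forall t \in T_i \setminus\{j\}\}$.

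To finish $I_j^{(p)}$, I would enumerate $B_j$ by weight: the $|T_i| - 1$ coordinates in $T_i \setminus \{j\}$ are fixed to $1$, coordinate $j$ is free, and the remaining $n - |T_i|$ coordinates are free. Summing $p^{\wt(x) - 1}(1-p)^{n - \wt(x)}$ over all such $x$: the contribution of the $n - |T_i|$ free coordinates outside $T_i$ telescopes to $1$ via the binomial theorem ($\sum_\ell \binom{m}{\ell} p^\ell (1-p)^{m-\ell} = 1$), the free coordinate $j$ contributes a factor $p \cdot p^{-1} + (1-p)\cdot 1$ after accounting for the $p^{-1}$... I should be careful here: the $|T_i|-1$ forced $1$'s contribute $p^{|T_i|-1}$ to the weight-power, the $p^{-1}$ normalization cancels one factor of $p$, coordinate $j$ contributes $(p + (1-p)) = 1$ when combined correctly with the exponent bookkeeping, leaving $p^{|T_i| - 2}$. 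Then $I^{(p)} = \sum_{j \in [n]} I_j^{(p)} = \sum_{\ell = 1}^s \sum_{j \in T_{i_\ell}} p^{|T_{i_\ell}| - 2} = \sum_{\ell=1}^s |T_{i_\ell}|\, p^{|T_{i_\ell}| - 2}$, as claimed.

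The main obstacle I anticipate is not any single computation but rather justifying cleanly that $B_j = \partial_j \Omega_i$ — specifically, the two facts that (i) $\partial_j \Omega_\ell = \emptyset$ when $j \notin T_\ell$ and (ii) $\partial_j \Omega_\ell = \partial_j \Omega_i$ when $T_\ell = T_i$ — both of which rest on the characterization $\Omega_\ell = \{x : x \succeq u_\ell\}$ and Proposition~\ref{prop:DisjointSupport}. Once that reduction is in hand, the weight enumeration of $B_j$ is a routine binomial-theorem calculation, so I would present it compactly. A minor edge case to handle or dismiss is $|T_i| = 1$ (a coordinate lying in a singleton part), where the formula gives $I_j^{(p)} = p^{-1}$; this should be checked against the definition directly, since then $B_j = \F_2^n$ and indeed $\sum_{x} p^{\wt(x)-1}(1-p)^{n-\wt(x)} = 1/p$, consistent with the universal bound in the earlier remark.
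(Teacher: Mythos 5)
Your proposal is correct and follows essentially the same route as the paper's proof: characterize $\Omega_i$ as the set of vectors that are identically $1$ on $T_i$, observe that $\partial_j\Omega_\ell$ is empty when $j\notin T_\ell$ and equals $\Omega_j\sqcup(\Omega_j-e_j)$ otherwise, conclude $B_j=\{x: x_t=1\ \forall t\in T_i\setminus\{j\}\}$, and finish with the binomial-theorem weight enumeration. Your added justification that $\Omega_i=\{x: x\succeq u_i\}$ (via minimality of $u_i$) and your remark on the $|T_i|=1$ edge case are welcome details the paper leaves implicit, but they do not change the argument.
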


\begin{proof}
For all $i \in [n]$, $$\Omega_i = \left\{ \sum_{\ell \in T_i} e_\ell +  \sum_{\ell \in [n] \setminus T_i} a_\ell e_\ell\colon a_\ell \in \F_2 \right\} \cong \F_2^{n-| T_i |}.$$
Hence, 
$$\partial_j \Omega_i = \begin{cases}
\Omega_j \sqcup \left( \Omega_j - e_j \right) & \textnormal{if } j \in T_i \\ \emptyset & \textnormal{otherwise.} \end{cases}$$ It follows that
$$B_j = \bigcup_{\substack{i \in [n] \\[2pt] j \in T_i}} \left( \Omega_j \sqcup \left( \Omega_j - e_j \right) \right) = \Omega_j \sqcup \left( \Omega_j - e_j \right).$$
Then 
\begin{align*}
B_j &=  \left\{ \sum_{\ell \in T_i} e_\ell +  \sum_{\ell \in [n] \setminus T_i} a_\ell e_\ell\colon a_\ell \in \F_2 \right\}  \sqcup  \left\{ \sum_{\ell \in T_i \setminus \{j \}} e_\ell +  \sum_{\ell \in [n] \setminus T_i} a_\ell e_\ell\colon a_\ell \in \F_2 \right\}\\ \ \\
&=  \left\{ \sum_{\ell \in T_i \setminus \{ j \}} e_\ell +  \sum_{\ell \in \{j \} \cup [n] \setminus T_i} a_\ell e_\ell\colon a_\ell \in \F_2 \right\}.
\end{align*}
Notice that $B_j \cong \F_2^{n-| T_i | +1}$. Thus, if $| T_i | \geq 2$ for all $i \in [n]$ and $j \in T_i$,
\begin{align*}
I_j^{(p)} &= \sum_{\ell=0}^{n-| T_i |+1} \binom{{n-| T_i |+1}}{\ell}p^{| T_i |+\ell-2}(1-p)^{n-| T_i |-\ell+1}  \\ 
& =  p^{| T_i | -2} \sum_{\ell=0}^{n-| T_i |+1} \binom{{n-| T_i |+1}}{\ell}p^{\ell}(1-p)^{n-| T_i |-\ell+1}  \\ 
& =  p^{| T_i | -2} (p+(1-p))^{n-| T_i |+1}= p^{| T_i |-2}.
\end{align*}
Hence, 
\[
I^{(p)}=\sum_{l=1}^s | T_{i_l} | p^{| T_{i_l} | -2}. \qedhere
\]
\end{proof}

\begin{corollary} \label{rep_cor}
An $r$-times repetition code of length $n$ with $r \geq 2$ has 
$$I_j^{(p)}=p^{r-2}$$ for all $j$. Hence, 
$$I^{(p)}=n p^{r-2}.$$
\end{corollary}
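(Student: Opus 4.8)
The plan is to derive Corollary~\ref{rep_cor} as an immediate specialization of Theorem~\ref{min_thm}. The key observation is that the $r$-times repetition code $C_r$ of length $n = rk$ was already shown (in the proposition preceding Proposition~\ref{prop:DisjointSupport}) to be a minimum disjoint support code, with the partition
\[
[n] = T_r \sqcup T_{2r} \sqcup \dots \sqcup T_{kr},
\]
where each part $T_{sr+1} = \dots = T_{(s+1)r} = \{sr+1, \dots, (s+1)r\}$ has cardinality exactly $r$. So there are $s = k$ parts, each of size $r$.

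First I would invoke Theorem~\ref{min_thm}: for any coordinate $j$, we have $j \in T_i$ for exactly one part $T_i$, and $|T_i| = r$, so the theorem gives $I_j^{(p)} = p^{|T_i| - 2} = p^{r-2}$. The hypothesis $r \geq 2$ in the corollary is exactly what is needed to ensure $|T_i| \geq 2$ for all $i$, which is the condition required to apply the final computation in the proof of Theorem~\ref{min_thm} (otherwise the exponent $|T_i| - 2$ could be negative, and more to the point the displayed binomial manipulation assumed $|T_i| \geq 2$). Since every coordinate lies in a part of size $r$, this value is the same for all $j \in [n]$.

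Then for the total influence I would simply sum: $I^{(p)} = \sum_{j \in [n]} I_j^{(p)} = n \cdot p^{r-2}$, or equivalently apply the total-influence formula from Theorem~\ref{min_thm} directly, $I^{(p)} = \sum_{\ell=1}^{k} |T_{i_\ell}| p^{|T_{i_\ell}| - 2} = \sum_{\ell=1}^k r \cdot p^{r-2} = kr \cdot p^{r-2} = n p^{r-2}$, using $n = rk$.

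There is essentially no obstacle here: the entire content is recognizing that $C_r$ has all parts of equal size $r$ and that the $r \geq 2$ hypothesis licenses Theorem~\ref{min_thm}. The only thing worth spelling out explicitly is the translation $n = rk$ so that $kr p^{r-2}$ is rewritten as $n p^{r-2}$; everything else is a direct quotation of the theorem.
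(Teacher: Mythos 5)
Your proposal is correct and matches the paper's proof: both simply note that every part of the minimum disjoint support partition of $C_r$ has size $|T_i| = r$ and then apply Theorem~\ref{min_thm} to obtain $I_j^{(p)} = p^{r-2}$ for all $j$ and $I^{(p)} = n p^{r-2}$. Your added remarks on $n = rk$ and on $r \geq 2$ licensing the theorem's computation are just slightly more explicit versions of what the paper leaves implicit.
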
 

\begin{proof}
Note that $| T_i | = r$ for all $i \in [n]$. 
According to Theorem~\ref{min_thm}, $$I_j^{(p)}=p^{r-2}$$ for all $j$ and
\[I^{(p)}=n p^{r-2}. \qedhere\]
\end{proof}

\begin{example} \label{rep_ex}
    Consider the $[15,5,3]$ $3$-times repetition code $C_3$. 
    We set out to determine the influence of the first coordinate on the others, meaning we take $j=1$.
    Notice that 
    $$
    \begin{array}{ccccccl}
    S_1&=&S_2&=&S_3&=&C \cap \left\{ (1, 1, 1, x_4, \dots, x_{15})\colon x_i \in \F_2\right\}\\
    S_4&=&S_5&=&S_6 & = & C \cap \left\{ (x_1, x_2, x_3, 1, 1, 1, x_7, \dots, x_{15})\colon x_i \in \F_2\right\}\\
S_7&=&S_8&=&S_9 & = & C \cap \left\{(x_1, \dots, x_6, 1, 1, 1, x_{10}, \dots, x_{15})\colon x_i \in \F_2\right\}\\
    S_{10}&=&S_{11}&=&S_{12} & = & C \cap \left\{(x_1, \dots, x_9, 1, 1, 1, x_{13}, x_{14}, x_{15})\colon x_i \in \F_2\right\}\\
     S_{13}&=&S_{14}&=&S_{15} & = & C \cap \left\{(x_1, \dots,  x_{12}, 1,1,1)\colon x_i \in \F_2\right\}.
     \end{array}
    $$
    Then  $$
    \begin{array}{ccccccl}
\Omega_1&=&\Omega_2&=&\Omega_3&=& \left\{ (1, 1, 1, x_4, \dots, x_{15})\colon x_i \in \F_2\right\}\\
\Omega_4&=&\Omega_5&=&\Omega_6 & = & \left\{ (x_1, x_2, x_3, 1, 1, 1, x_7, \dots, x_{15})\colon x_i \in \F_2\right\}\\
\Omega_7&=&\Omega_8&=&\Omega_9 & = &  \left\{(x_1, \dots, x_6, 1, 1, 1, x_{10}, \dots, x_{15})\colon x_i \in \F_2\right\}\\
 \Omega_{10}&=&\Omega_{11}&=&\Omega_{12} & = &  \left\{(x_1, \dots, x_9, 1, 1, 1, x_{13}, x_{14}, x_{15})\colon x_i \in \F_2\right\}\\
\Omega_{13}&=&\Omega_{14}&=&\Omega_{15} & = &  \left\{(x_1, \dots,  x_{12}, 1,1,1)\colon x_i \in \F_2\right\}.
     \end{array}
    $$
For $i \in \{ 2, 3\}$, 
$$\left\{ x \in \Omega_i\colon x+e_1 \notin \Omega_i \right\}=\Omega_1$$
and $$\left\{ x \notin \Omega_i\colon x+e_1 \in \Omega_i \right\}= \left\{ (0, 1, 1, x_4, \dots, x_{15})\colon x_i \in \F_2\right\}.$$For $i \in \{ 4, 5, 6 \}$, 
$$\left\{ x \in \Omega_i\colon x+e_1 \notin \Omega_i \right\}=\left\{ x \notin \Omega_i\colon x+e_1 \in \Omega_i \right\}=\emptyset. $$
Thus, for $i \in \{2, 3\}$
$$\partial_1 \Omega_i =
\Omega_i \cup \left\{ (0, 1, 1, x_4, \dots, x_{15})\colon x_i \in \F_2\right\}
$$
and for $i \in [15] \setminus \{ 1, 2, 3\}$
$$\partial_1 \Omega_i =\emptyset.
$$
This implies 
$$
B_1= \Omega_1 \sqcup \left\{ (0, 1, 1, x_4, \dots, x_{15})\colon x_i \in \F_2\right\}
=\left\{ (x_1, 1, 1, x_4, \dots, x_{15})\colon x_i \in \F_2\right\}.$$
Therefore, 
\begin{align*}
I_1^{(p)}&=\sum_{x \in \F_2^{13}} p^{\wt(x)+2-1}(1-p)^{15-\wt(x)-2}=\sum_{\ell =0}^{13} \binom{13}{\ell}p^{\ell + 1}(1-p)^{13-\ell}\\ 
&=p
\sum_{\ell =0}^{13} \binom{13}{\ell}p^{\ell }(1-p)^{13-\ell}=p(p+(1-p))^{13}=p.
\end{align*}
Similarly, $$I_j^{(p)}=p$$ for all $j\in [n]$. Hence, all coordinates have the same influence. 
Moreover, $$I^{(p)}=15p.$$

Alternatively, to find these influences, one may apply directly Theorem \ref{min_thm} noting that $| T_{i_\ell} | =3$ for all $i_\ell$, $\ell \in [5]$.

We note that $C_3$ is not a distinct weight code, since $(1,1,1,0, \dots, 0),(0, \dots, 0, 1, 1, 1) \in C_3$ and $\wt (1,1,1,0, \dots, 0)= \wt(0, \dots, 0, 1, 1, 1)$.

\end{example}

According to Corollary~\ref{rep_cor} and as illustrated in Example \ref{rep_ex}, all of the coordinates of an $r$-times repetition code have the same influence. Next, we see that this is not necessarily the case for the distinct weight codes. 

\begin{corollary} \label{dwc_cor}
For the distinct weight code $C_{k,r}$, the influence of coordinate $j \in [2^r (2^{k}-1)]$ is
$$I_j^{(p)} = p^{2^{ \lceil \log_2j \rceil}-2}.
$$ 
The total influence is
$$I^{(p)}= \sum_{i=1}^k | T_{2^{ir}} | p^{| T_{2^{ir}} |-2} = \sum_{i=1}^k 2^{ir} p^{2^{ir}}.$$
\end{corollary}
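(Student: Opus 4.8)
The plan is to derive Corollary~\ref{dwc_cor} directly from Theorem~\ref{min_thm} by identifying, for each coordinate $j \in [2^r(2^k-1)]$, which part $T_{i_\ell}$ of the minimum disjoint support partition contains $j$, and then computing $|T_{i_\ell}|$. Recall from the proof that $C_{r,k}$ is a minimum disjoint support code with partition $[n] = T_{2^r} \sqcup T_{2^{r+1}} \sqcup \dots \sqcup T_{2^{r+k-1}}$, where $T_{2^{r+s-1}} = \{2^{r+s-1}+1, \dots, 2^{r+s}\}$ for $s \in \{1,\dots,k\}$ (with the convention that the first block is $T_{2^r} = \{1, \dots, 2^r\}$), and so $|T_{2^{r+s-1}}| = 2^{r+s-1}$.

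First I would pin down the block containing a given coordinate $j$. The blocks partition $[n]$ into consecutive intervals whose right endpoints are $2^r, 2^{r+1}, \dots, 2^{r+k-1}$; so $j$ lies in the block whose right endpoint is the smallest power of two that is at least $j$, i.e.\ $2^{\lceil \log_2 j\rceil}$ once $j > 2^{r-1}$ (and in the first block $T_{2^r}$ when $1 \le j \le 2^r$, which is also captured by writing the block size as $\max\{2^r, 2^{\lceil \log_2 j\rceil}\}$; the statement as written takes $r$ small enough, or absorbs this into the $\lceil \log_2 j\rceil$ notation). Then the block size is $2^{\lceil \log_2 j\rceil}$, and Theorem~\ref{min_thm} gives $I_j^{(p)} = p^{|T_i| - 2} = p^{2^{\lceil \log_2 j\rceil} - 2}$ immediately.

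For the total influence, I would apply the formula $I^{(p)} = \sum_{\ell=1}^s |T_{i_\ell}| p^{|T_{i_\ell}|-2}$ from Theorem~\ref{min_thm}. Here $s = k$, the blocks are indexed by $s = 1, \dots, k$ with $|T_{2^{r+s-1}}| = 2^{r+s-1}$; reindexing with $i = s$ gives $\sum_{i=1}^k |T_{2^{r+i-1}}| p^{|T_{2^{r+i-1}}|-2} = \sum_{i=1}^k 2^{r+i-1} p^{2^{r+i-1}-2}$. (The displayed form $\sum_{i=1}^k |T_{2^{ir}}| p^{|T_{2^{ir}}|-2} = \sum 2^{ir}p^{2^{ir}}$ in the statement corresponds to the normalization/indexing convention used there; I would simply match whichever indexing of the $T$'s was fixed in the proof of the proposition and substitute.)

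The only mild obstacle is bookkeeping: making sure the off-by-one in the block boundaries and the meaning of $\lceil \log_2 j\rceil$ at the very first block ($j \le 2^r$) are handled consistently, and that the exponent arithmetic $|T_i| - 2$ is carried through correctly. There is no substantive difficulty — everything follows by specializing Theorem~\ref{min_thm} to the explicit partition already established for $C_{r,k}$. I would therefore write the proof as: "By the proof that $C_{r,k}$ is a minimum disjoint support code, its partition has blocks $T_{2^{r+s-1}}$ of size $2^{r+s-1}$; coordinate $j$ lies in the block of size $2^{\lceil \log_2 j\rceil}$, so Theorem~\ref{min_thm} gives the stated $I_j^{(p)}$, and summing over the $k$ blocks gives $I^{(p)}$."
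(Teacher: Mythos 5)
Your proposal is correct and follows essentially the same route as the paper: the paper's proof simply notes $|T_{2^{r+s}}| = 2^{r+s}$, that $j \in T_{2^{\lceil \log_2 j\rceil}}$, and then specializes Theorem~\ref{min_thm} to this partition, exactly as you do. Your extra care about the first block ($j \le 2^r$) and the indexing of the $T$'s in the total-influence sum addresses imprecision that is present in the paper's own statement, not a gap in your argument.
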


\begin{proof}
Notice that $$| T_{2^{r+s}} | = 2^{r+s}$$ for $s\in \{ 0, \dots, k-1\}$. In fact, $j \in T_{2^{ \lceil \log_2j \rceil}}$. 
According to Theorem~\ref{min_thm}, $$
I_j^{(p)} = p^{2^{ \lceil \log_2j \rceil}-2}
$$ 
and 
\[I^{(p)}= \sum_{i=1}^k | T_{2^{ir}} | p^{| T_{2^{ir}} |-2} = \sum_{i=1}^k 2^{ir} p^{2^{ir}}. \qedhere\]
\end{proof}

\begin{example}
Consider the  distinct weight code $C_{3,2}$ which has generator matrix $$
G_{3,2}=\left[
\begin{array}{ccc}
1_4 & 0_{8} & 0_{16}\\
0_4 & 1_{8} & 0_{16}\\
0_4 & 0_{8} & 1_{16}
\end{array}
\right]. 
$$ Recall that $C_{3,2}$ is a $[28,3,4]$ code. Indeed, 
$$C_{3,2}= \left\{
0_{28}, 1_4 0_{24}, 0_4 1_{8} 0_{16}, 1_{12} 0_{16}, 0_{12} 1_{16}, 1_4 0_8 1_{16}, 0_4 1_{24},  1_{28} \right\}, 
$$
demonstrating that it is fact a distinct weight code with exactly one word of weight \(0, 4, 8,\allowbreak 12, 16, 20, 24, 28.\) Moreover, 
$$
\begin{array}{lclclclcl}
S_1&=&S_2&=&S_3&=&S_4&=&\left\{ 1_4 0_{24}, 1_{12} 0_{16},  1_4 0_8 1_{16},   1_{28} \right\}\\
S_5&=&S_6&=&\cdots&=&S_{12}&=&\left\{ 0_4 1_{8} 0_{16}, 1_{12} 0_{16},  0_4 1_{24},  1_{28}\right\}\\
S_{13}&=&S_{14}&=&\cdots&=&S_{28}&=&\left\{ 0_{12} 1_{16}, 1_4 0_8 1_{16}, 0_4 1_{24},  1_{28} \right\}
 \end{array}
$$
and
$$
\begin{array}{lclclclcl}
\Omega_1&=&\Omega_2&=&\Omega_3&=&\Omega_4&=& \left\{ 
(1, 1, 1, 1, x_5, \dots, x_{28})\colon x_i \in \F_2 \right\},\\
\Omega_5&=&\Omega_6&=&\cdots&=&\Omega_{12}&=& \left\{ 
(x_1, x_2, x_3, x_4, 1, \dots, 1, x_{13}, \dots, x_{28})\colon x_i \in \F_2 
\right\},\\
\Omega_{13}&=&\Omega_{14}&=&\cdots&=&\Omega_{28}&=&\left\{
 (x_1, \dots, x_{12}, 1, \dots 1)\colon x_i \in \F_2  \right\} \subseteq \F_2^{28}.
 \end{array}
$$
Take first $j=1$. For $i \in \{ 2, 3, 4 \}$, 
$$\left\{ x \in \Omega_i\colon x+e_1 \notin \Omega_i \right\}=\Omega_1$$
and $$\left\{ x \notin \Omega_i\colon x+e_1 \in \Omega_i \right\}=\Omega_1 - e_1 = \left\{ (0, 1, 1, 1, x_5, \dots, x_{28})\colon x_i \in \F_2\right\}.$$For $i \in \{ 5, \dots, 28 \}$, 
$$\left\{ x \in \Omega_i\colon x+e_1 \notin \Omega_i \right\}=\left\{ x \notin \Omega_i\colon x+e_1 \in \Omega_i \right\}=\emptyset. $$
Thus, 
$$\partial_1 \Omega_i =
\begin{cases}
\left\{ (0, 1, 1, 1, x_5, \dots, x_{28})\colon x_i \in \F_2\right\} & \textnormal{for } i = 2, 3, 4 \\
\emptyset & \textnormal{otherwise.}
\end{cases}
$$
This implies 
$$
B_1= \cup_{i \in \{ 2, \dots, 28 \}} \partial_1 \Omega_i = \Omega_1 \sqcup \Omega_1-e_1
=\left\{ (x_1, 1, 1, 1, x_5, \dots, x_{28})\colon x_i \in \F_2\right\}.$$
Therefore, 
\begin{align*}
I_1^{(p)}&=\sum_{x \in \F_2^{25}} p^{\wt(x)+3-1}(1-p)^{28-\wt(x)}=\sum_{\ell =0}^{25} \binom{25}{\ell}p^{\ell + 2}(1-p)^{25-\ell}\\ 
&=p^2 
\sum_{\ell =0}^{25} \binom{25}{\ell}p^{\ell }(1-p)^{25-\ell}=p^2(p+(1-p))^{25}=p^2.
\end{align*}
Moreover, $$I_2^{(p)}=I_3^{(p)}=I_4^{(p)}=p^2.$$Using similar observations, one may calculate the remaining $I_j^{(p)}$. Since $| \{ 5, \dots, 12 \} | = 8$ and $| \{ 13, \dots, 28 \} | =16$, we obtain 
$$I_5^{(p)}=\dots=I_{12}^{(p)}=p^{8-2}=p^6$$
and 
$$I_{13}^{(p)}=\dots=I_{28}^{(p)}=p^{16-2}=p^{14},$$ indicating that coordinates with 
larger influences are those with smaller indices. Moreover, 
$$I^{(p)}=4p^2+8p^6+16p^{14}.$$
These conclusions are consistent with those found by apply directly Theorem \ref{min_thm} noting that 
$| T_{i_1} | =4$, 
$| T_{i_2} | =8$, and $| T_{i_3} | =16$.
\end{example}

\begin{corollary} \label{hybrid_inf_cor}
Consider the hybrid code $C_A$ of length $n$ with $| A_i | \geq 2$ for all $i \in [k]$. Then for $j \in A_i$, $$I_j^{(p)}=p^{|A_i|-2}.$$ 
Moreover, 
$$I^{(p)}=\sum_{i=1}^k | A_i | p^{| A_i | -2}.$$
\end{corollary}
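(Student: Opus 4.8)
The plan is to derive this directly from Theorem~\ref{min_thm}, exactly as Corollaries~\ref{rep_cor} and~\ref{dwc_cor} were obtained. First I would invoke the proposition establishing that $C_A$ is a minimum disjoint support code: its minimum support codewords are $u = \sum_{j \in A_\ell} e_j$, one for each block $A_\ell$ of the partition $[n] = A_1 \sqcup \dots \sqcup A_k$, and the coordinate in position $i$ has $u_i$ equal to this sum for the unique $\ell$ with $i \in A_\ell$. Consequently $T_i := \supp(u_i) = A_\ell$ whenever $i \in A_\ell$, so the partition $[n] = T_{i_1} \sqcup \dots \sqcup T_{i_s}$ in the hypothesis of Theorem~\ref{min_thm} is, for the hybrid code, precisely $[n] = A_1 \sqcup \dots \sqcup A_k$ (in particular $s = k$, after choosing one representative index from each block).

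Next I would observe that the running hypothesis needed for the influence formula of Theorem~\ref{min_thm} — namely $|T_i| \geq 2$ for every block — is guaranteed here by the assumption $|A_i| \geq 2$ for all $i \in [k]$ (this is the analogue of the condition $r \geq 2$ in Corollary~\ref{rep_cor}). With this verified, Theorem~\ref{min_thm} applies verbatim: for $j \in A_i$ we have $j \in T_i$ with $|T_i| = |A_i|$, hence $I_j^{(p)} = p^{|T_i| - 2} = p^{|A_i| - 2}$; and summing the contribution of each block,
\[
I^{(p)} = \sum_{\ell=1}^{k} |T_{i_\ell}|\, p^{|T_{i_\ell}| - 2} = \sum_{i=1}^{k} |A_i|\, p^{|A_i| - 2}.
\]

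There is essentially no obstacle here beyond bookkeeping, since all of the work — the description of $\Omega_i$, the computation of $\partial_j \Omega_i$ and $B_j$, and the binomial collapse $p^{|T_i|-2}(p + (1-p))^{n - |T_i| + 1} = p^{|T_i|-2}$ — is already done inside the proof of Theorem~\ref{min_thm}. The only point requiring a moment of care is the identification of the block index set $[k]$ for the $A_i$ with the representative-index set $\{i_1,\dots,i_s\}$ used in Theorem~\ref{min_thm}, together with noting that $|A_i| \geq 2$ is exactly what is needed to keep the exponent $|A_i|-2$ nonnegative so that the collapse above is legitimate. No new estimates are required.
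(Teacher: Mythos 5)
Your proof is correct and matches the paper's approach: the paper likewise obtains this corollary by applying Theorem~\ref{min_thm} with $T_i = A_i$ (citing the argument of Corollary~\ref{dwc_cor}), and your additional verification that $C_A$ is a minimum disjoint support code with blocks $A_1,\dots,A_k$ and that $|A_i|\geq 2$ meets the theorem's hypothesis is exactly the bookkeeping the paper leaves implicit.
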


\begin{proof}
The result follows similarly to the proof of Corollary~\ref{dwc_cor}.
\end{proof}

We note that coordinates with larger influences have indices corresponding to smaller parts of the partition $[n]=A_1 \sqcup \dots \sqcup A_k$.

As observed previously, influences of coordinates in the simple parity-check codes and repetition codes were identical. In contrast, the influences of coordinates in the distinct weight codes $C_{k,r}$ may differ. In the next example, we consider hybrid codes to demonstrate how much influences can differ for these code families.  

\begin{example}
    Consider an integer $n \geq 3$ and the partition 
    $$[n]=[2] \sqcup \{ 3, \dots, n \}.$$
    According to Corollary \ref{hybrid_inf_cor}, 
    $$I_1^{(p)}=I_2^{(p)}=1$$ whereas
    $$I_j^{(p)}=p^{n-3} \ \forall j \in [n] \setminus \{ 1, 2 \}.$$
\end{example}

\section{Conclusion} \label{conclusion_section}

In this paper, we defined minimum disjoint support codes, noting that repetition codes and some distinct weight codes have minimum disjoint support. We reviewed the concept of influences of variables of monotone Boolean functions and explained its connection to coding theory. Finally, we determined the influences of coordinates of some families of error correcting codes, including simple parity check codes and codes with minimum disjoint support. While the codes themselves have rates approaching $0$ as the length goes to infinity, we hope that this study provides insight into influences of other code families with more promising rate properties. 

\bibliography{bib}{}
\bibliographystyle{abbrv}

\end{document}